\def\ps@headings{%
\def\@oddhead{\mbox{}\scriptsize\rightmark \hfil \thepage}%
\def\@evenhead{\scriptsize\thepage \hfil \leftmark\mbox{}}%
\def\@oddfoot{}%
\def\@evenfoot{}}
\newtheorem{defn}{Definition} 
\newtheorem{thm}{Theorem} 
\newtheorem{cor}{Corollary}
\begin{document}
\title{\textbf{CD-PHY: Physical Layer Security in Wireless Networks through Constellation Diversity}}
\author{Mohammad Iftekhar Husain, Suyash Mahant and Ramalingam Sridhar, \textit{Member, IEEE}}
\maketitle
\begin{abstract}
A common approach for introducing security at the physical
layer is to rely on the channel variations of the wireless
environment. This type of approach is not always suitable for wireless
networks where the channel remains static for most of the network
lifetime. For these scenarios, a channel independent physical layer
security measure is more appropriate which will rely on a secret known
to the sender and the receiver but not to the eavesdropper.  In this
paper, we propose CD-PHY, a physical layer security technique that
exploits the constellation diversity of wireless networks which is
independent of the channel variations. The sender and the receiver
use a custom bit sequence to constellation symbol mapping to secure the physical layer
communication which is not known a priori to the eavesdropper. Through theoretical modeling and experimental
simulation, we show that this information theoretic construct can
achieve Shannon secrecy and any brute force attack from the
eavesdropper incurs high overhead and minuscule probability of
success. Our results also show that the high bit error rate also makes
decoding practically infeasible for the eavesdropper, thus securing
the communication between the sender and receiver.
\end{abstract}
\section{Introduction}
\label{introduction}
In wireless networks, physical (PHY) layer security enables nodes to communicate securely without using resource intensive encryption mechanisms at the application layer. PHY layer security measures are resource friendly due to their information theoretic construct based on \textit{perfect secrecy}~\cite{Shannon49} in contrast with the computational hardness approaches~\cite{Woeginger03}. By introducing security at the PHY layer, communication in wireless networks can avoid the stepping stone of most attacks: \textit{eavesdropping}. In general, the broadcast nature of the the communication makes wireless networks more vulnerable to eavesdropping attacks than the wired counterpart. PHY layer security measures are able thwart such attacks to a considerable extent~\cite{EkremU09, WangYZ07}.
\par Most of the existing PHY layer security schemes are based on the variation of channel characteristics~\cite{CroftPK10, JanaPCKPK09, MathurTMYR08}. However, without highly mobile or dynamic environment which can introduce significant variation in channel characteristics, these schemes do not perform as expected~\cite{GollakotaK11}. Experimental results show that in static scenarios, these scheme mostly provide keys with very low entropy which is not desired in many cases~\cite{JanaPCKPK09}. In this paper, we propose a PHY layer security technique, CD-PHY, based on \textit{constellation diversity}, which is not dependent on channel characteristics and the performance does not vary depending on static or mobile scenario.
\par The underlying technique for CD-PHY is simple. At the physical layer, the sender and the intended receiver uses a custom constellation mapping~\cite{Takahara03} which acts as a secret key to secure the communication from an eavesdropper. In other words, a sequence of bits from the sender is converted into symbols on the constellation space based on a mapping known only to the sender and the intended receiver. Using the correct mapping, the intended receiver will be able to decode the signal and reconstruct the original message. However, the eavesdropper will not even be able to decode the signal correctly without the knowledge of constellation mapping, let alone reconstruction of the message.
\par The guarantee of security provided by CD-PHY is much stronger than just keeping the modulation type (BPSK, QPSK, and QAM\footnote{BPSK and QPSK refers to Binary and Quadrature Phase Shift Keying, respectively. QAM refers to Quadrature Amplitude Modulation. An overview of modulation schemes by Zeimer can be found at~\cite{Ziemer96}.}, for example) a secret between the sender and the receiver. Because, if the sender and receiver uses the standard constellation mapping for these modulations, an eavesdropper can use advanced machine learning techniques~\cite{Mobasseri00, Ramkumar09} to identify the modulation type and then use the standard mapping to decode the signal. In case of CD-PHY, the custom constellation mapping is known only to the sender and the receiver which is the basis of security for this information theoretic construct.
\par Our theoretical modelling, security analysis and experimental simulation reveals the following about CD-PHY:
\begin{itemize}
\item For the eavesdropper, the probability of successfully decode the symbols range from $10^{-3}$ at $10dB$ SNR\footnote{Signal-to-noise ratio.} to $0.015$ at $0dB$ SNR, which is very low (Section~\ref{theoreticalmodelling}),
\item CD-PHY achieves perfect secrecy as a cipher and has a very high unicity distance which ensures that the eavesdropper will not be able to find the correct decoding regardless of the amount of ciphertexts it collects (Section~\ref{shannonsecrecy}),
\item A brute-force key search attack on CD-PHY has complexity $\#P$ (Sharp P)\footnote{The set of the counting problems associated with the decision problems in the set NP.} which is believed to be much harder than polynomial time algorithms (Section~\ref{complexity}), and
\item Performance wise, in the presence of CD-PHY, regardless of the location, the bit error rate at the eavesdropper is always as high as $50\%$ which is equivalent to random guessing for the decoding purposes (Section~\ref{eval}).
\end{itemize}
\section{Background and Observations}
\label{backgroundandobservations}

At the physical layer, a modulation technique prepares the digital bit sequences for transmission over the analog wireless medium. A crucial part of this operation is to map the bit sequences to symbols which can be represented as points on a two dimensional complex plane called the \textit{constellation diagram}. Figure \ref{fig:16QAMCircular} shows an example constellation diagram from 16-ary Quadrature Amplitude Modulation (16QAM circular). An alternate constellation diagram is shown in Figure \ref{fig:16QAMRectangular} which is known as 16QAM rectangular. If the transmitter wishes to send a bit sequence, it sets the real (x-axis) and imaginary (y-axis) part according to the constellation diagram. Mathematically, a signal can be expressed by the following equation:
\[ s(t)=I(t).cos(2\pi f_o t)+Q(t).sin(2\pi f_o t)\]
where I(t) and Q(t) are real and imaginary parts of the symbols from the constellation diagram and $f_o$ is the modulating frequency.
The receiver recovers the real and imaginary values after demodulation, and plots each symbol on the constellation plane. To correctly decode the original message, the receiver needs to know both the type of modulation as well as symbol to bit sequence mapping\footnote{Constellation mapping.}. 

\begin{figure*}[ht]
\begin{minipage}[b]{0.5\linewidth}
\centering
\includegraphics[scale=0.4]{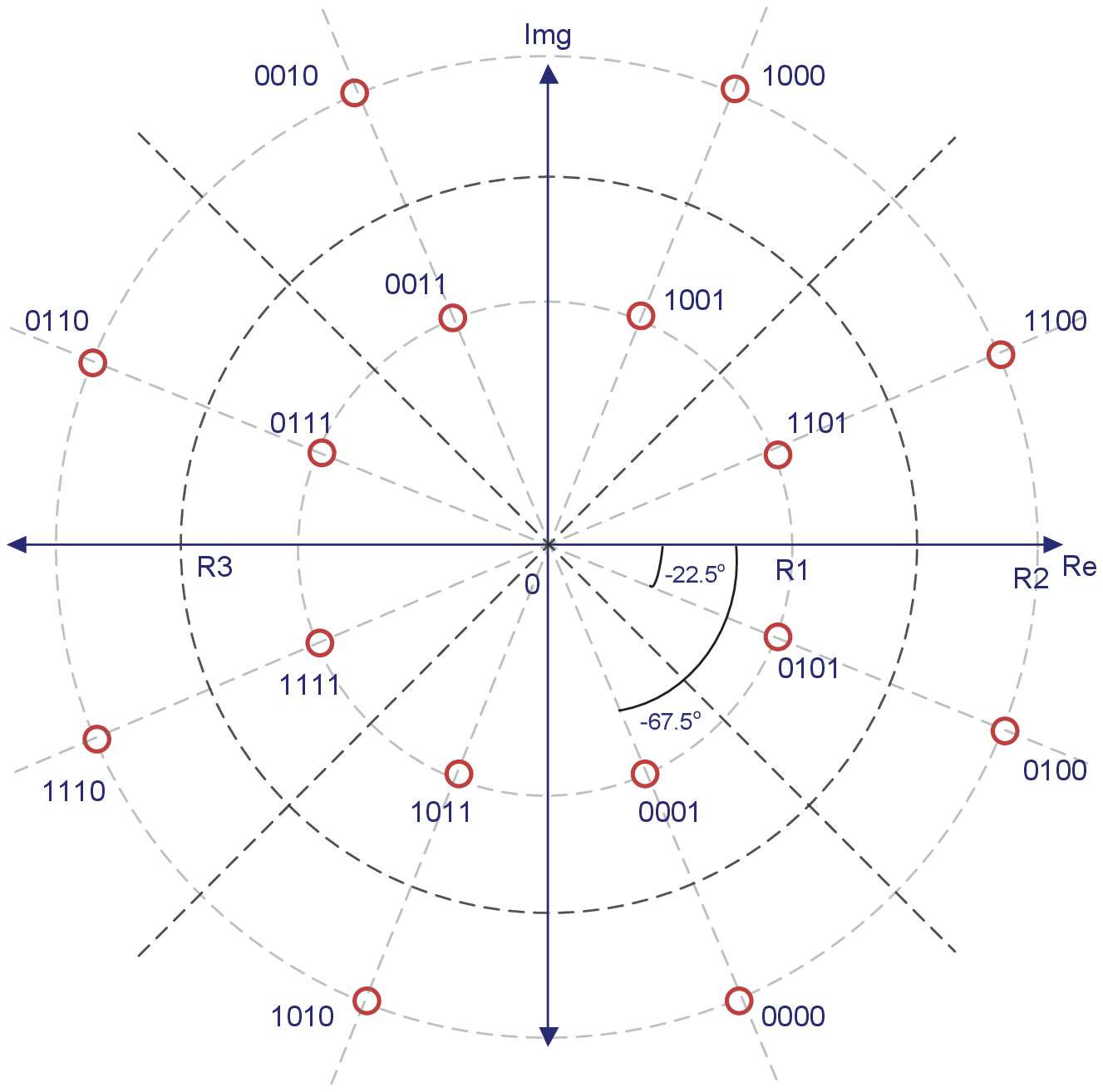}
\caption{16QAM Circular Constellation}
\label{fig:16QAMCircular}
\end{minipage}
\hspace{0.5cm}
\begin{minipage}[b]{0.5\linewidth}
\centering
\includegraphics[scale=0.4]{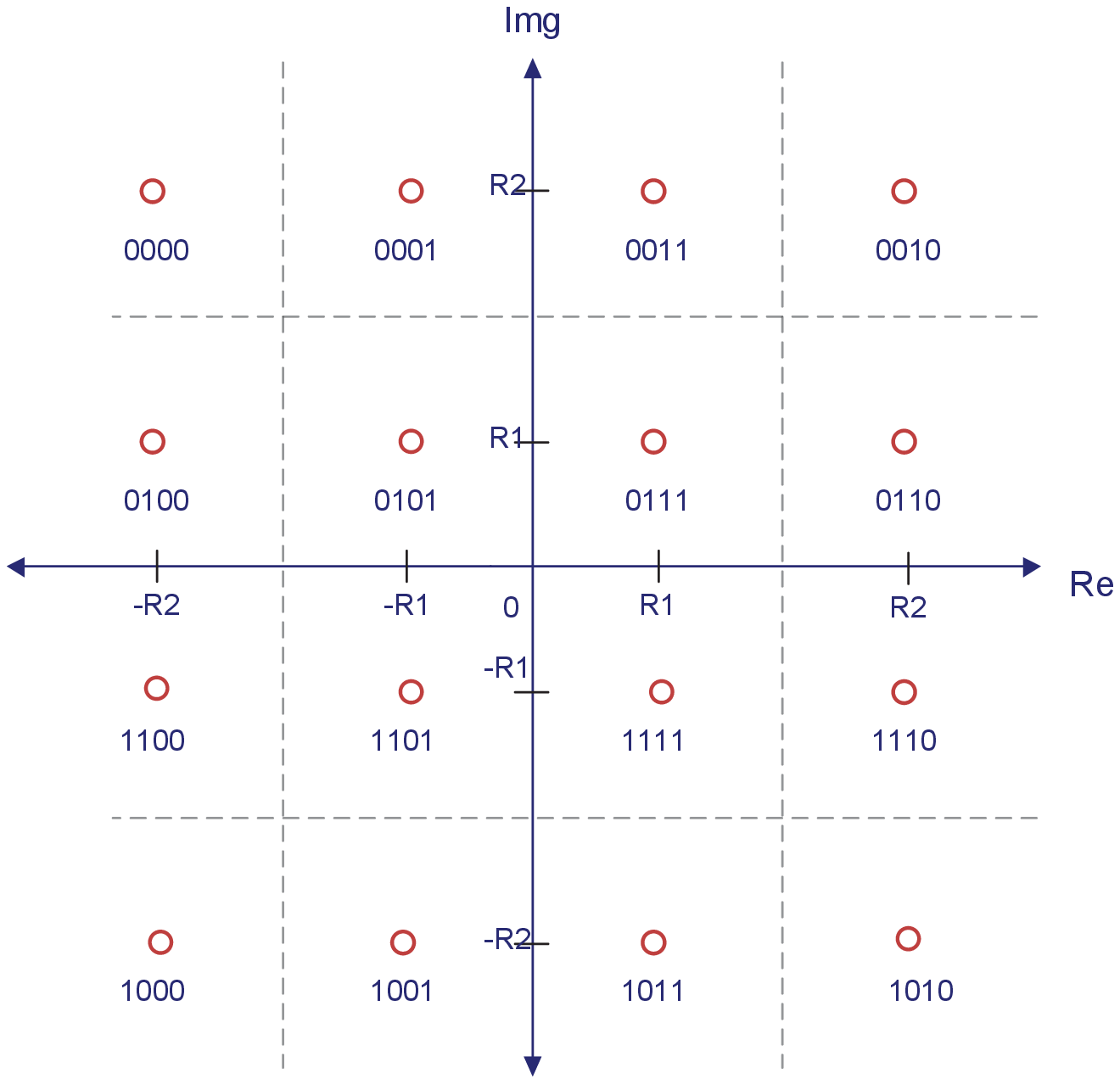}
\caption{16QAM Rectangular Constellation}
\label{fig:16QAMRectangular}
\end{minipage}
\end{figure*}

\par When only the modulation type is the secret, the eavesdropper can use machine learning based techniques~\cite{Mobasseri00, Ramkumar09} to identify the modulation type and use standard bit sequence to symbol mapping to decode the data. However, if the sender and receiver use a custom constellation mapping which is not known to the eavesdropper, the complexity of correct decoding becomes very high. For an M-ary QAM, the eavesdropper has to try all $M!$ mappings to find out the correct decoding, which is very impractical for scenarios when the value of $M\geq8$.  
\begin{figure*}[ht]
\begin{minipage}[b]{0.5\linewidth}
\centering
\includegraphics[scale=0.55]{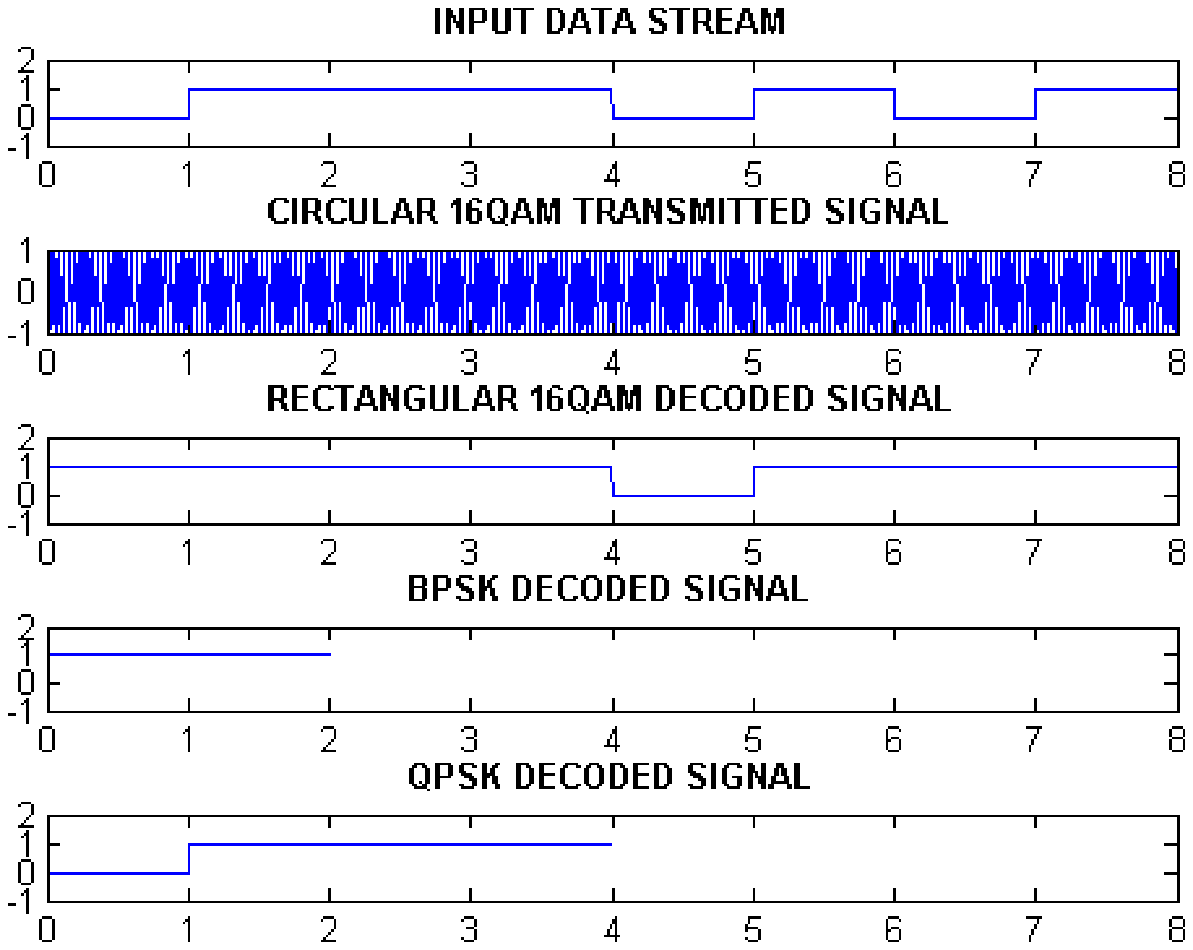}
\caption{Decoding failure when the original modulation is 16QAM circular.}
\label{fig:ModulationUnknown1}
\end{minipage}
\hspace{0.25cm}
\begin{minipage}[b]{0.5\linewidth}
\centering
\includegraphics[scale=0.55]{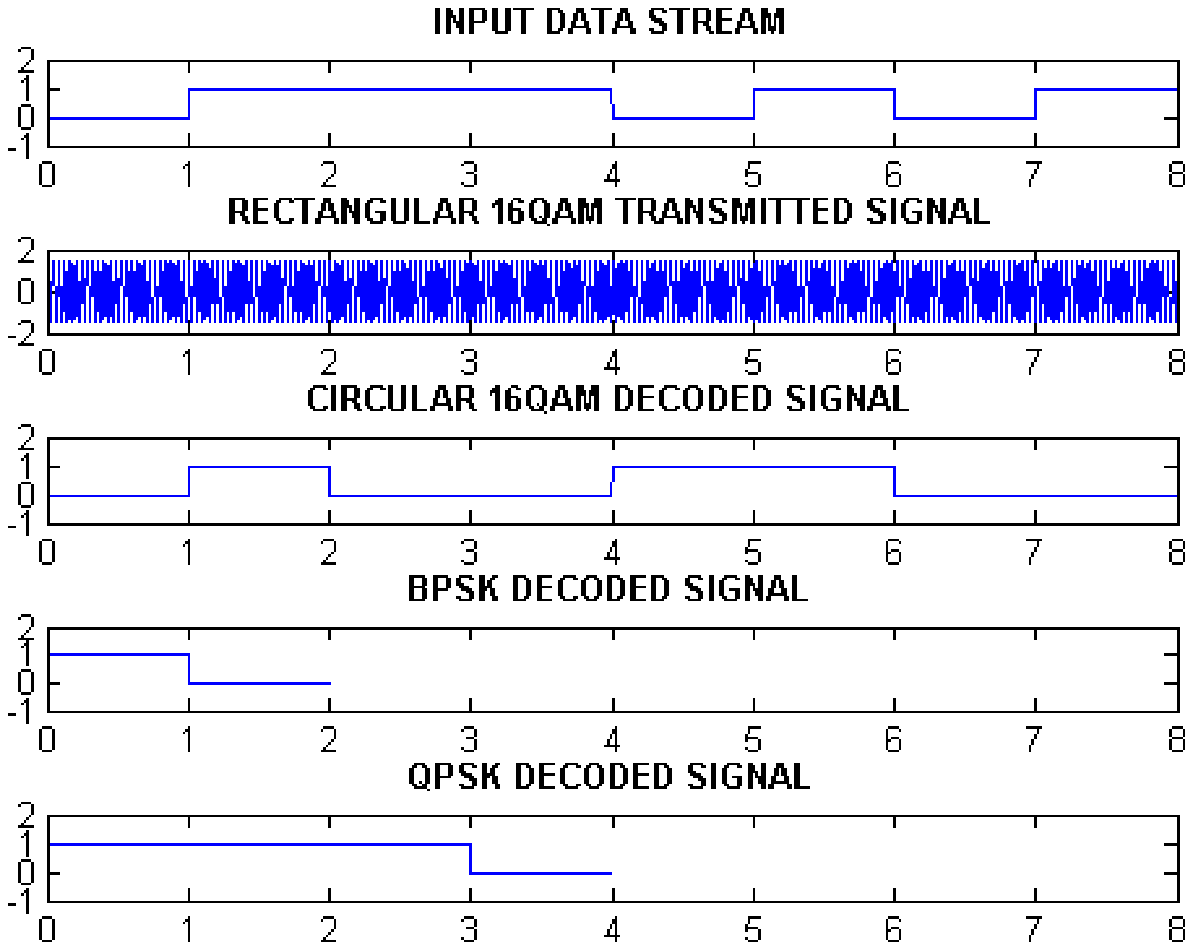}
\caption{Decoding failure when the original modulation is 16QAM rectangular.}
\label{fig:ModulationUnknown2}
\end{minipage}
\end{figure*}

\par Figure \ref{fig:ModulationUnknown1} shows the decoding failure when the eavesdropper tries to decode an original 16QAM circular modulated signal using different modulation types: BPSK, QPSK and 16QAM rectangular. The input data stream contained 8 bits, $01100101$. In 16QAM, each symbol consists of 4 bits. So, two symbols will be received by the eavesdropper. The QPSK receiver decodes two symbols as 4 bits and the BPSK receiver decodes it to 2 bits. Since the modulation classification was wrong, obviously the mapping will also be wrong resulting to a decoding failure. In the case of 16QAM rectangular, the receiver will correctly expand the symbols to 8 bits. However, since the constellation mapping was different\footnote{Refers to Figure \ref{fig:16QAMCircular} and \ref{fig:16QAMRectangular}.}, the final decoded data will be different from the input: $11110111$. Another decoding failure, where the original symbols belonged to 16QAM rectangular, is shown in Figure \ref{fig:ModulationUnknown2}. 
\par The intuitive design of CD-PHY is based on the above mentioned observations that without knowing the correct constellation mapping, it is not practically feasible for an eavesdropper to correctly decode the original message even though it might have the knowledge of modulation type.

\section{Adversarial Model}
We assume that the adversary (eavesdropper) is able to detect and will try to decode the communication between the sender and receiver. It can be either mobile or static. An adversary can also measure the channel parameters. It can exploit some machine learning techniques to identify the modulation type of the wireless communication, but it does not have prior knowledge of the constellation mapping between the sender and intended receiver.
\par We also assume the eavesdropper's computation and communication capability as powerful as the sender and receiver. The adversary can try to handle the original signal as noise or try interference cancellation and joint decoding. Finally, we assume that the adversary is passive and has no intention to launch active attacks such as a man-in-the-middle attack. This is a common assumption among most of the practical wireless security schemes \cite{GollakotaK11}. 
\section{Theoretical Modelling}
\label{theoreticalmodelling}
In this section, we derive the probability of an eavesdropper to correctly decode the message in the presence of gaussian noise when it knows the modulation type but does not know the constellation mapping. A very intuitive example of this case is the interaction between 16QAM circular and rectangular modulations discussed in Section \ref{backgroundandobservations}. We use this example to derive the probability measure of correct decoding when the sender modulation is 16QAM circular and eavesdropper modulation is 16QAM rectangular. 
\par As discussed in Section \ref{backgroundandobservations}, each QAM symbol has a real and imaginary value associated with it in the constellation space. Mathematically, for an M-ary QAM, these real and imaginary values can range $\pm a,\hdots,\pm (2m-1)a$, where $m=\log_2 M$, $a^2=1.5E_s/(M-1)$ with $E_s$ being the symbol energy \cite{digicom2}. 
Table \ref{tab:ConstellationMapping} shows the bit sequence to constellation symbol mapping in 16QAM circular and 16QAM rectangular Scheme
These values are further factored by $a=\sqrt{E_s/10}$ to normalize the average symbol energy to 1. 
\begin{table*}[htbp]
  \centering
  \caption{Bit sequence to constellation symbol mapping in 16QAM Circular and 16QAM Rectangular Scheme}
    \begin{tabular}{cccccc}
    \hline
    \textbf{Bit Sequence} & \textbf{16QAM Circular} & \textbf{16QAM Rectangular} & \textbf{Bit Sequence} & \textbf{16QAM Circular} & \textbf{16QAM Rectangular} \\
    \hline
    \hline
    $0000$ & $1.53-3.69j$ & $-3+3j$ & $1000$ & $1.53+3.69j$ & $-3-3j$ \\
    $0001$ & $.76-1.84j$ & $-1+3j$ & $1001$ & $.76+1.84j$ & $-1-3j$ \\
    $0010$ & $-1.53+3.69j$ & $3+3j$ & $1010$ & $-1.53-3.69j$ & $3-3j$ \\
    $0011$ & $-.76+1.84j$ & $1+3j$ & $1011$ & $-.76-1.84j$ & $1-3j$ \\
    $0100$ & $3.69-1.53j$ & $-3+j$ & $1100$ & $3.69+1.53j$ & $-3-j$ \\
    $0101$ & $1.84-.76j$ & $-1+j$ & $1101$ & $1.84+.76j$ & $-1-j$ \\
    $0110$ & $-3.69+1.53j$ & $3+j$ & $1110$ & $-3.69-1.53j$ & $3-j$ \\
    $0111$ & $-1.84+.76j$ & $1+j$ & $1111$ & $-1.84-.76j$ & $1-j$ \\
    \hline
    \end{tabular}%
  \label{tab:ConstellationMapping}%
\end{table*}%

\par The decision variable for demodulation in the presence of \textit{additive white gaussian noise} can be obtained as
\begin{equation}
	Y \approx X + n 
\end{equation}
where the noise term $n(t)$ is assumed with power spectral density $\frac{N_o}{2}$, zero mean and variance of $\sigma^2 = N_o$. Thus, the decision variable $Y$ is a complex gaussian with a complex mean $X$ and variance $\sigma^2 = N_o$. In other words, $Y$ has a two dimensional gaussian distribution in complex plane. So, the real and imaginary parts of $Y$ can be separated as independent gaussian variables as $Y_R$ and $Y_I$ with means at $Re(X)$ and $Im(X)$.
\[ Y_R = Re(Y) = Re(X) + n_R = X_R + n_R \]
\[ Y_I = Im(Y) = Im(X) + n_I = X_I + n_I \]
where $n_R$ and $n_I$ are the components of noise along real and imaginary axes with a mean zero and variance $\sigma_R^2 = \sigma_I^2 = \frac{N_o}{2}$. Now, the probability density function of $Y_R$ can be expressed as: 
\begin{eqnarray}
f(Y_R) &=& \frac{1}{\sqrt{2\pi \sigma_R^2}}\  exp - \{ \frac{(Y_R-X_R)^2}{2\sigma_R^2}\} \nonumber \\
 &=& \frac{1}{\sqrt{\pi N_o}}\  exp - \{ \frac{(Y_R-X_R)^2}{N_o}\}
\end{eqnarray}
Similarly, the probability density function of $Y_I$ can also be expressed as:
\begin{equation}
	f(Y_I) = \frac{1}{\sqrt{\pi N_o}}\  exp - \{ \frac{(Y_I-X_I)^2}{N_o}\}
\end{equation}

Now, to calculate the probability of the successful decoding at the eavesdropper with 16QAM rectangular scheme when the original symbols were transmitted in 16QAM circular scheme, we first need to consider the probabilities at individual symbol level. These probabilities are then aggregated using the symmetry and mutual independence of the symbols. In the following derivations, $S_i^r$ denotes a symbol $S_i$ in 16QAM rectangular scheme, $S_i^c$ represents a symbol $S_i$ in 16QAM circular scheme and  four symbols are chosen from the constellation diagram in such a way that symmetrically they represent all sixteen points of a QAM scheme.
\subsection{Decoding of symbol $0000$} 
First, we consider $S_0^c = 0000$ being transmitted. From Table \ref{tab:ConstellationMapping}, the real and imaginary parts of $0000$ are 
	\[ X_R = 1.53 \sqrt{\frac{E_s}{10}}\ \ \& \ \ X_I = -3.69 \sqrt{\frac{E_s}{10}}   \]
	The received symbol $Y$ has a complex gaussian distribution as discussed earlier with the mean at $X_R+jX_I$. Now, the probability that the symbol $Y$ can be correctly decoded by the eavesdropper using 16QAM rectangular decoder can be found based on the decision space for $S_0^r=0000$ in 16QAM rectangular scheme. Formally, the probability that decoded symbol is $S_0^r$ given $S_0^c$ was transmitted is:
\begin{eqnarray}
	\nonumber
	\lefteqn{\scalebox{0.95}{$P(Y = S_0^r | S_0^c) = $}}\\
  \nonumber
	& \scalebox{0.95}{$P\!\left(\!-\infty < Y_R < -2\sqrt{\frac{E_s}{10}}\!\right) P\!\left(\!2\sqrt{\frac{E_s}{10}} < Y_I < \infty\!\right)$}
	\end{eqnarray}
	\[P(Y = S_0^r | S_0^c) = \int_{-\infty}^{-2\sqrt{\frac{E_s}{10}}} f(Y_R)dY_R\times \int_{2\sqrt{\frac{E_s}{10}}}^{\infty} f(Y_I)dY_I \]
	\begin{eqnarray}
	\nonumber
	\lefteqn{ P(Y = S_0^r | S_0^c) = }\\
	\nonumber
	& \frac{1}{\sqrt{\pi N_o}} \int_{-\infty}^{-2\sqrt{\frac{E_s}{10}}} exp-\{\frac{(Y_R-1.53\sqrt{\frac{E_s}{10}})^2}{N_o}\} dY_R\\
	\nonumber
	& \times \frac{1}{\sqrt{\pi N_o}} \int_{2\sqrt{\frac{E_s}{10}}}^{\infty} exp-\{\frac{(Y_I-(-3.69\sqrt{\frac{E_s}{10}}))^2}{N_o}\} dY_I \\
	\nonumber
	\end{eqnarray}
	Using the simplification of above integrals,
	\begin{equation}
	\begin{split}
	P(Y = S_0^r | S_0^c) = \frac{1}{\sqrt{\pi}} \int_{3.53 \sqrt{\frac{E_s}{10 N_o}}}^{\infty} exp\{ -t^2\}dt \\
	\times \frac{1}{\sqrt{\pi}} \int_{5.69 \sqrt{\frac{E_s}{10 N_o}}}^{\infty} exp\{ -z^2\}dz 
	\nonumber
	\end{split}
	\end{equation}
	\begin{eqnarray}
	\nonumber
	\lefteqn{P(Y = S_0^r | S_0^c)}\\
	\nonumber
	& = &\scalebox{0.95}{$\frac{1}{2} erfc\left(\!3.53\sqrt{\frac{E_s}{10 N_o}}\!\right)\frac{1}{2} erfc\left(\!5.69\sqrt{\frac{E_s}{10 N_o}}\!\right)$}\\
	\nonumber
	& = &\scalebox{0.95}{$\frac{1}{4} erfc\left(\!3.53\sqrt{\frac{E_s}{10 N_o}}\!\right) erfc\left(\!5.69\sqrt{\frac{E_s}{10 N_o}}\!\right)$}\\
  \end{eqnarray}
 Here, $erfc()$ is the \textit{complementary error function}. 
  \subsection{Decoding of symbol $0100$}
	 Now, we consider the symbol $S_1^c = 0100$ being transmitted. Similar to the previous example, 
	\[  X_R = 3.69 \sqrt{\frac{E_s}{10}}\ \ \& \ \ X_I = -1.53 \sqrt{\frac{E_s}{10}}   \]
	So, the probability that the eavesdropper correctly decodes the symbol $0100$ is: 
	\begin{eqnarray}
	\nonumber
	\lefteqn{\scalebox{0.95}{$ P(Y = S_1^r| S_1^c) = $}}\\
	& \scalebox{0.95}{$P\left(-\infty < Y_R < -2\sqrt{\frac{E_s}{10}}\right) P\left(0 < Y_I < 2\sqrt{\frac{E_s}{10}}\right)$}
	\label{eq:0100}
	\end{eqnarray}
	Now, the left part of the right hand side of Equation \ref{eq:0100} gives us the following:
	\begin{eqnarray}
	\nonumber
	\lefteqn{\scalebox{0.9}{$P\left(-\infty < Y_R < -2\sqrt{\frac{E_s}{10}}\right)$} }\\
	\nonumber
	& =\! &\scalebox{1}{$\frac{1}{\sqrt{\pi N_o}} \int_{-\infty}^{-2\sqrt{\frac{E_s}{10}}}\!exp\!-\!\{\frac{(Y_R-3.69\sqrt{\frac{E_s}{10}})^2}{N_0}\}dY_R$}\\
	& =\! &\scalebox{0.95}{$\frac{1}{2}erfc\left(5.69\sqrt{\frac{E_s}{10N_o}}\right)$}
	\label{eq:0100left}
	\end{eqnarray}
	Next, the right part yields the following:  
	\[ \scalebox{0.95}{$P\left(0 < Y_I < 2\sqrt{\frac{E_s}{10}}\right) = 1 - P\left(Y_I<0 , Y_I>2\sqrt{\frac{E_s}{10}}\right) $}\]
	\begin{eqnarray}
	\nonumber
	\lefteqn{\scalebox{0.95}{$ P\left(0 < Y_I < 2\sqrt{\frac{E_s}{10}}\right) = 1 - $}}\\
	\nonumber
	& [\frac{1}{\sqrt{\pi N_o}}\int_{-\infty}^{0}\!exp\!-\!\{\frac{(Y_I-(-1.53\sqrt{\frac{E_s}{10}})))^2}{N_o}\}dY_I\\
	\nonumber
	& + \frac{1}{\sqrt{\pi N_o}}\int_{2\sqrt{\frac{E_s}{10}}}^{\infty}\!exp\!-\!\{\frac{(Y_I-(-1.53\sqrt{\frac{E_s}{10}})))^2}{N_o}\}dY_I ]
	\end{eqnarray}
	\begin{equation}
	\begin{split}
	\scalebox{0.9}{$P\left(0 < Y_I < 2\sqrt{\frac{E_s}{10}}\right) = 1 - \frac{1}{2}erfc\left(1.53\sqrt{\frac{E_s}{10N_o}}\right)$} \\
	\scalebox{0.9}{$- \frac{1}{2}erfc\left(3.53\sqrt{\frac{E_s}{10N_o}}\right)$} 
	\end{split}
	\label{eq:0100right}
	\end{equation}
	Using Equation \ref{eq:0100left},\ref{eq:0100right} on Equation \ref{eq:0100}, we have the following:
	\begin{eqnarray}
	\nonumber
 	\lefteqn{P(Y = S_1^r|S_1^c) = \scalebox{0.95}{$\frac{1}{2}erfc\left(5.69\sqrt{\frac{E_s}{10N_o}}\right) \times $}}\\
 	\nonumber
 	& \left[1 - \frac{1}{2}erfc\left(1.53\sqrt{\frac{E_s}{10N_o}}\right) - \frac{1}{2}erfc\left(3.53\sqrt{\frac{E_s}{10N_o}}\right)\right] 
 	\end{eqnarray}
 	\begin{eqnarray}
 	\nonumber
  \lefteqn{P(Y = S_1^r|S_1^c) = \scalebox{0.95}{$\frac{1}{2}erfc\left(5.69\sqrt{\frac{E_s}{10N_o}}\right) $}}\\
  \nonumber
  & - \scalebox{0.95}{$\frac{1}{4}erfc\left(5.69\sqrt{\frac{E_s}{10N_o}}\right) \ erfc\left(1.53\sqrt{\frac{E_s}{10N_o}}\right)$} \\	
  & - \scalebox{0.95}{$\frac{1}{4}erfc\left(5.69\sqrt{\frac{E_s}{10N_o}}\right) erfc\left(3.53\sqrt{\frac{E_s}{10N_o}})\right) $}
  \end{eqnarray}
 
\subsection{Decoding of symbol $0101$}
 Now, we consider, $S_2^c = 0101$ is being transmitted. In this case: 
	\[ X_R = 1.84\sqrt{\frac{E_s}{10}}\ \ \& \ \ X_I = -0.76\sqrt{\frac{E_s}{10}}\]
So, the	probability that the eavesdropper correctly decodes the symbol is:
	\begin{eqnarray}
	\nonumber
	\lefteqn{ P(Y = S_2^r|S_2^c) = }\\
	& \scalebox{0.95}{$P\left(\!-2\sqrt{\frac{E_s}{10}} < Y_R < 0\!\right) P\left(\!0 < Y_I < 2\sqrt{\frac{E_s}{10}}\!\right)$}
	\label{eq:0101}
	\end{eqnarray}
	We first consider the left part of the right hand side of Equation \ref{eq:0101}: 
	\[ \scalebox{0.95}{$P\left(\!-2\sqrt{\frac{E_s}{10}} < Y_R < 0\!\right) = 1 - P\left(\!Y_R>0 , Y_R < -2\sqrt{\frac{E_s}{10}}\!\right)$}\]
	\begin{eqnarray}
	\nonumber
	\lefteqn{\scalebox{0.95}{$P\left(-2\sqrt{\frac{E_s}{10}} < Y_R < 0\!\right) = 1 - $}}\\
	\nonumber
	& [\frac{1}{\sqrt{\pi N_o}}\int_{0}^{\infty}\! exp\!-\!\{\frac{(Y_R-1.84\sqrt{\frac{E_s}{10}})^2}{N_o}\}dY_R \times \\
	\nonumber
	& \frac{1}{\sqrt{\pi N_o}}\int_{-\infty}^{-2\sqrt{\frac{E_s}{10}}}exp-\{\frac{(Y_R-1.84\sqrt{\frac{E_s}{10}})^2}{N_o}\}dY_R] 
	\end{eqnarray}
	\begin{eqnarray}
	\nonumber
  \lefteqn{\scalebox{0.95}{$P\left(\!-2\sqrt{\frac{E_s}{10}} < Y_R < 0\!\right) = 1 - $}}\\
  & \frac{1}{2}erfc\left(\!3.84\sqrt{\frac{E_s}{10N_o}}\!\right) - \frac{1}{2}erfc\left(\!-1.84\sqrt{\frac{E_s}{10N_o}}\!\right)
	\label{eq:0101left}
	\end{eqnarray}
	Similarly, we consider the right part of the right hand side of Equation \ref{eq:0101}:
	\[ \scalebox{0.95}{$P\!\left(\!0 < Y_I < 2\sqrt{\frac{E_s}{10}}\!\right) = 1 - P\!\left(\!Y_I<0 , Y_I>2\sqrt{\frac{E_s}{10}}\!\right)$} \]
	\begin{eqnarray}
	\nonumber
  \lefteqn{\scalebox{0.95}{$P\left(\!0 < Y_I < 2\sqrt{\frac{E_s}{10}}\!\right) = 1 - $}}\\
  \nonumber
  & [\frac{1}{\sqrt{\pi N_o}}\int_{-\infty}^{0}\!exp\!-\!\{\frac{(Y_I-(-0.76\sqrt{\frac{E_s}{10}}))^2}{N_o}\}dY_I \\
  \nonumber
	& + \frac{1}{\sqrt{\pi N_o}}\int_{2\sqrt{\frac{E_s}{10}}}^{\infty}\!exp\!-\!\{\frac{(Y_I-(-0.76\sqrt{\frac{E_s}{10}}))^2}{N_1}\}dY_I\ ]
	\end{eqnarray}
	\begin{equation}
	\begin{split}
  \scalebox{0.9}{$P\!\left(\!0 < Y_I < 2\sqrt{\frac{E_s}{10}}\!\right) = 1 - \frac{1}{2}erfc\left(\!-0.76\sqrt{\frac{E_s}{10N_o}}\!\right) $}\\
  \scalebox{0.9}{$- \frac{1}{2}erfc\left(\!2.76\sqrt{\frac{E_s}{10N_o}}\!\right)$}
  \end{split}
  \label{eq:0101right}
  \end{equation}
	Thus, combining Equation \ref{eq:0101left}, \ref{eq:0101right}, we have:
	\begin{eqnarray}
	\nonumber
	\lefteqn{\scalebox{0.95}{$P(Y = S_2^r|_2^c) = $}}\\
	\nonumber
	& \scalebox{0.9}{$\left[1\!-\!\frac{1}{2}erfc\!\left(\!3.84\sqrt{\frac{E_s}{10N_o}}\!\right)\!-\!\frac{1}{2}erfc\!\left(\!-1.84\sqrt{\frac{E_s}{10N_o}}\!\right)\right] $}\\
	\nonumber
	& \scalebox{0.9}{$\times\! \left[1\!-\!\frac{1}{2}erfc\!\left(\!-0.76\sqrt{\frac{E_s}{10N_o}}\!\right)\!-\!\frac{1}{2}erfc\!\left(\!2.76\sqrt{\frac{E_s}{10N_o}}\!\right)\right]$}\\
  \end{eqnarray}
  \subsection{Decoding of symbol $0001$}
  Finally, we consider symbol $S_3^c = 0001$ being transmitted. In this case: 
 \[ X_R\! =\! 0.76\sqrt{\frac{E_s}{10}}\ \ \& \ \ X_I\! =\! -1.84\sqrt{\frac{E_s}{10}}\]
 So, the probability that eavesdropper correctly decodes symbol $0001$ is:
 \begin{equation}
\scalebox{0.87}{$P\!\left(\!Y\!=\!S_3^r|S_3^c\right)\!=\!P\!\left(\!\!-2\sqrt{\frac{E_s}{10}}\!\!<\!\!Y_R<0\!\right)\!P\!\left(\!2\sqrt{\frac{E_s}{10}}\!<\!Y_I\!<\!\infty\!\right)$}
 \label{eq:0001}
 \end{equation}
Considering the left part of the right hand side of Equation \ref{eq:0001}:
 \begin{equation}
 \nonumber
 \scalebox{0.9}{$P\!\left(\!-2\sqrt{\frac{E_s}{10}}\!<\!Y_R\!<\!0\right)\! =\! 1\! -\! P\!\left(\!Y_R\!<\!-2\sqrt{\frac{E_s}{10}},Y_R\!>\!0\!\right)$}
 \end{equation}
 \begin{eqnarray}
 \nonumber
 \lefteqn{\scalebox{0.9}{$P\!\left(-2\sqrt{\frac{E_s}{10}}\!<\!Y_R\!<\!0\right) =\  1\ -\  $}} \\
 \nonumber
 & \scalebox{1}{$[\frac{1}{\sqrt{\pi N_o}}\int_{-\infty}^{-2\sqrt{\frac{E_s}{10}}}\!exp\!-\!\{\frac{(Y_R-0.76\sqrt{\frac{E_s}{10}})^2}{N_o}\}dY_R$} \\
 \nonumber
 & \scalebox{1}{$+ \frac{1}{\sqrt{\pi N_o}}\int_{0}^{\infty}\!exp\!-\!\{\frac{(Y_R-0.76\sqrt{\frac{E_s}{10}})^2}{N_o}\}dY_R]$}
 \end{eqnarray}
 \begin{equation}
 \begin{split}
 \scalebox{0.9}{$P\!\left(\!-2\sqrt{\frac{E_s}{10}}\!<\!Y_R\!<\!0\!\right)\! =\! 1 - \frac{1}{2}erfc\!\left(\!2.76\sqrt{\frac{E_s}{10N_o}}\!\right)$}\\
 \scalebox{0.9}{$- \frac{1}{2}erfc\!\left(\!-0.76\sqrt{\frac{E_s}{10N_o}}\!\right)$}
 \end{split}
 \label{eq:0001left}
 \end{equation}
Similarly, the right part yields:
 \begin{eqnarray}
 \nonumber
 \lefteqn{\scalebox{0.88}{$P\left(2\sqrt{\frac{E_s}{10}}\!<\!Y_I\!<\!\infty\right) = $}}\\
 \nonumber
 & \frac{1}{\sqrt{\pi N_o}}\! \int_{2\sqrt{\frac{E_s}{10}}}^{\infty}\!exp\!-\!\{\frac{\!(\!Y_I\!-(-1.84\sqrt{\frac{E_s}{10}}))^2}{N_o}\}dY_I
 \end{eqnarray}
 \begin{equation}
 \scalebox{0.9}{$P\left(2\sqrt{\frac{E_s}{10}}\!<\! Y_I\!< \!\infty\right) = \frac{1}{2}erfc\left(3.84\sqrt{\frac{E_s}{10N_o}}\right)$}
 \label{eq:0001right}
 \end{equation}
 By combining the outcomes of Equation \ref{eq:0001left}, \ref{eq:0001right}, we get the following:
 \begin{eqnarray}
 \nonumber
 \lefteqn{\scalebox{0.9}{$P(Y = S_3^r|S_3^c) = \frac{1}{2}erfc\left(3.84\sqrt{\frac{E_s}{10N_o}}\right)$}} \\
 \nonumber
 & \scalebox{0.9}{$- \frac{1}{4}erfc\left(3.84\sqrt{\frac{E_s}{10N_o}}\right) erfc\left(2.76\sqrt{\frac{E_s}{10N_o}}\right)$}\\
 & \scalebox{0.9}{$- \frac{1}{4}erfc\left(3.84\sqrt{\frac{E_s}{10N_o}}\right) erfc\left(-0.76\sqrt{\frac{E_s}{10N_o}}\right)$}	
\end{eqnarray}
As mentioned earlier, based on the symmetry of QAM constellation diagrams, other symbols will also have probabilities equal to one of the following symbols: $S_0,S_1,S_2$ or $S_3$. Assuming all symbols have equal probability of being generated and transmitted i.e. $P(S_k)=1/16$ where $(k=0 \ldots 15)$, the total probability $P(C)$ that the data transmitted by 16QAM circular transmitter and correctly decoded by 16QAM rectangular eavesdropper is:
 \[ P(C) = P(S_k)\times4\times[P(Y=S_0^r|S_0^c)+P(Y=S_1^r|S_1^c)\]
\[\ \ \  +P(Y=S_2^r|S_2^c)+P(Y=S_3^r|S_3^c)] \] 
\begin{equation}
\begin{split}
P(C) = \frac{1}{4}[-\frac{1}{4}erfc\!\left(\!5.69\sqrt{\frac{E_s}{10N_o}}\!\right)\!erfc\!\left(\!1.53\sqrt{\frac{E_s}{10N_o}}\!\right)\\
+ \frac{1}{4}erfc\!\left(\!-1.84\sqrt{\frac{E_s}{10N_o}}\!\right)\!erfc\!\left(\!2.76\sqrt{\frac{E_s}{10N_o}}\!\right)\\
+ \frac{1}{4}erfc\!\left(\!-1.84\sqrt{\frac{E_s}{10N_o}}\!\right)\!erfc\!\left(\!-0.76\sqrt{\frac{E_s}{10N_o}}\!\right)\\
- \frac{1}{2}erfc\!\left(\!-0.76\sqrt{\frac{E_s}{10N_o}}\!\right)\!-\!\frac{1}{2}erfc\!\left(\!2.76\sqrt{\frac{E_s}{10N_o}}\!\right)\\
+ \frac{1}{2}erfc\!\left(\!5.69\sqrt{\frac{E_s}{10N_o}}\!\right)\!-\!\frac{1}{2}erfc\!\left(\!-1.84\sqrt{\frac{E_s}{10N_o}}\!\right)\!+\!1]
\end{split}
\label{eq:finalprob}
\end{equation}
Here, $N_o$ is the power spectral density of the noise and $E_s$ is the symbol energy of the signal. So, the term $E_s/N_o$ is a representative of the SNR at the receiver. Since Equation \ref{eq:finalprob} contains $erfc()$ function, as we increase the value of SNR in the $erfc()$ function, the probability will decrease. So, the probability of correct decoding is adversely affected by the SNR of the wireless medium at receivers. This theoretical fact is illustrated further in Figure \ref{fig:prob1}. 
\begin{figure}[ht]
	\centering	
	\includegraphics[scale=0.5]{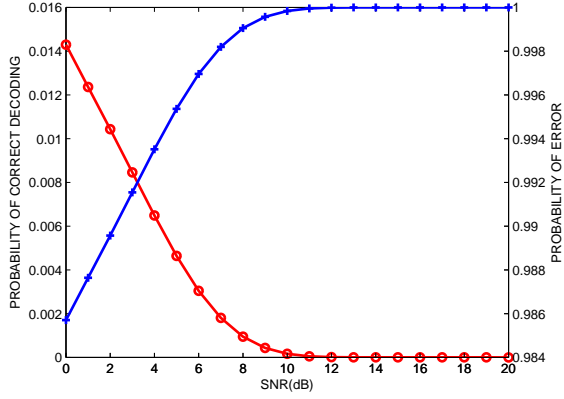}
	\caption{Probability that the eavesdropper decodes correctly and incorrectly at different signal-to-noise ratio.}
	\label{fig:prob1}
\end{figure}
The line with circles refers to the probability of correct decoding and the line with crosses refers to the probability of error. At $0dB$ SNR,  the probability of error for the eavesdropper is $0.002$. At SNR values above $20dB$, the probability of error is nearly $1$ which makes the decoding almost infeasible in practice. In comparison, for an intended receiver with 16QAM circular scheme, the probability of error at $0dB$ SNR is around $1$, and $0$ for a SNR of $20dB$~\cite{digicom2}.  

\section{Security analysis}
In this section, we analyze CD-PHY in terms of information theoretic security, security by complexity and resistance to potential modulation classification schemes such as Automatic Modulation Classification (AMC) \cite{Ramkumar09} and Digital Modulation Classification (DMC) \cite{Mobasseri00}.
\par The basis of information theoretic security is the fact that the bit sequence to constellation symbol mapping is known only to the sender and receiver(s). The eavesdropper does not have any a prior knowledge of the mapping. In the subsequent section, by applying Shannon's secrecy model (Figure \ref{fig:shannon}) to CD-PHY, we show that it can in deed achieve information theoretic security. In addition, any decoding attempt on the eavesdropper side incurs high complexity as it blindly tries to find the mapping. Finally, we show how CD-PHY thwarts the classification attempts by AMC and DMC.
\subsection{Information theoretic security} 
\label{shannonsecrecy}
\begin{figure}[ht]
	\centering
		\includegraphics[width=0.40\textwidth]{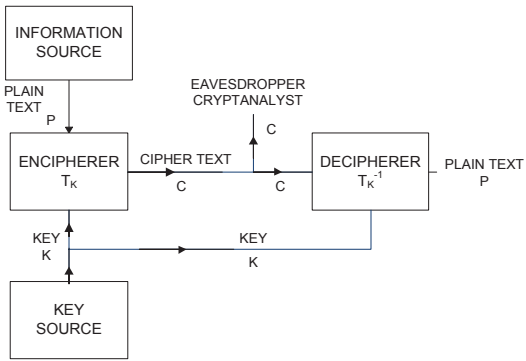}
	\caption{Shannon's Secrecy Model}
	\label{fig:shannon}
\end{figure}
In CD-PHY, the act of finding the correct mapping from the constellation points to bit sequences is essentially a deciphering operation for the eavesdropper. Here, the transmitted bit sequences are plaintext $P$, signal received by the eavesdropper is the ciphertext $C$, the mapping is the key $K$. For an M-ary QAM, the plaintext can have $M$ symbols, each of which are $\log_2 M$ bits. The key, mapping of bit sequences to constellation points, has $M!$ variations. Now, we define \textit{perfect secrecy} and \textit{unicity distance} which is due to Shannon \cite{Shannon49}.
\begin{defn} A cipher achieves perfect secrecy, if without knowing the secret key, the plaintext $P$ is independent of the ciphertext $C$, formally:
\begin{equation}
prob(\textbf{P}=P|\textbf{C}=E_K(P))=prob(\textbf{P}=P)
\label{eq:perfectsecrecy1}
\end{equation}
Equivalently, 
\begin{equation}
prob(\textbf{C}=C|\textbf{P}=E_K^{-1}(C))=prob(\textbf{C}=C)
\label{eq:perfectsecrecy2}
\end{equation}
\label{defn:perfectsecrecy}
\end{defn}
\begin{figure}[ht]
	\centering
		\includegraphics[width=0.25\textwidth]{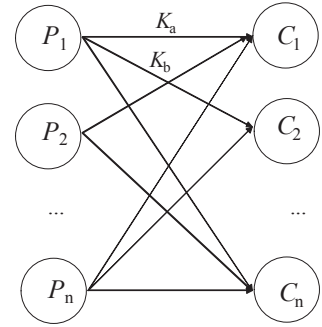}
	\caption{An illustration of plaintext to ciphertext mapping.}
	\label{fig:perfectcipher}
\end{figure}
\begin{defn}
Unicity distance of a cipher is the minimum amount of ciphertext needed for brute-force attack to succeed. Formally:
\begin{equation}
U=H(K)/D
\end{equation}
where H(K) is the entropy is the key and D is the redundancy of the message.
\label{defn:unicity}
\end{defn}
Definition \ref{defn:perfectsecrecy} leads us to the following theorem:
\begin{thm}
CD-PHY achieves perfect secrecy.
\end{thm}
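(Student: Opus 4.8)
The plan is to verify Definition~\ref{defn:perfectsecrecy} directly by a counting argument over the key space, in the style of the classical proof that a uniformly random substitution achieves perfect secrecy on a single symbol. First I would fix the granularity of the statement: treat one transmitted constellation symbol as the unit of plaintext and ciphertext, so that $\textbf{P}$ ranges over the $M$ bit sequences of length $\log_2 M$, $\textbf{C}$ ranges over the $M$ constellation points, and the key $\textbf{K}$ is a bijection between these two sets drawn uniformly from the $M!$ possibilities, independently of $\textbf{P}$. The additive noise term $n$ in $Y \approx X + n$ is independent of both $\textbf{K}$ and $\textbf{P}$, so it can be integrated out without disturbing any independence claim; equivalently, one reasons about the noiseless symbol $X = E_K(P)$ and observes that passage through the AWGN channel is a fixed stochastic map applied afterward, which cannot create dependence between $\textbf{P}$ and the observation where none existed.

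The core step is the observation that for any plaintext symbol $p$ and any constellation point $c$, the number of keys $K$ with $E_K(p)=c$ is exactly $(M-1)!$: once the image of $p$ is pinned to $c$, the remaining $M-1$ bit sequences may be mapped bijectively onto the remaining $M-1$ points in any of $(M-1)!$ ways. Hence $\Pr(\textbf{C}=c \mid \textbf{P}=p) = (M-1)!/M! = 1/M$ for every pair $(p,c)$, so $\textbf{C}$ is uniform and independent of $\textbf{P}$. Applying Bayes' rule then yields $\Pr(\textbf{P}=p \mid \textbf{C}=c) = \Pr(\textbf{C}=c \mid \textbf{P}=p)\,\Pr(\textbf{P}=p)/\Pr(\textbf{C}=c) = \Pr(\textbf{P}=p)$, which is precisely Equation~\ref{eq:perfectsecrecy1}; the equivalent form Equation~\ref{eq:perfectsecrecy2} follows symmetrically because $E_K$ is a bijection for each fixed $K$. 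I would also note that this is exactly the picture suggested by Figure~\ref{fig:perfectcipher}: every ciphertext point is reachable from every plaintext symbol under the same number of keys, so the ciphertext reveals nothing about the plaintext.

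The main obstacle is messages longer than a single symbol under one fixed mapping. As with any simple substitution cipher, if the same key encrypts a multi-symbol plaintext, then repeated plaintext symbols force repeated ciphertext symbols in the corresponding positions, so not every ciphertext string is possible for every plaintext string, and the strict equality of Definition~\ref{defn:perfectsecrecy} can fail. I would handle this either by restricting the perfect-secrecy claim to the per-symbol level (the natural reading of the construction, in which the relevant mapping is a fresh shared secret), or by phrasing the guarantee conditionally on the pattern of symbol repetitions, in which case the residual leakage is exactly the message redundancy $D$ that reappears in the unicity-distance expression of Definition~\ref{defn:unicity}. A secondary point to pin down is that the key distribution is genuinely uniform over all $M!$ mappings and is shared a priori only between sender and receiver; this is a modelling hypothesis rather than something to prove, and I would state it explicitly as an assumption of the theorem.
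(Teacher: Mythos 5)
Your proposal is correct and takes essentially the same route as the paper: both arguments reduce to showing $\Pr(\mathbf{C}=c \mid \mathbf{P}=p)=1/M$ for every pair, using the uniformity and plaintext-independence of the key, and then conclude via the same Bayes'-rule computation; your $(M-1)!/M!$ counting step simply makes explicit what the paper asserts directly as $prob(K=C\rightarrow P)=1/M$. Your caveat about multi-symbol messages under a single fixed mapping is the same limitation the paper handles separately via its corollary (requiring $M!\geq M^n$), so it refines rather than diverges from the paper's treatment.
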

\begin{proof}
Perfect secrecy requires that without the knowledge of the key, each ciphertext is equally probably to map to any plaintext of that domain. Since the symbols are independent of each other and equally probable to map any of the constellation points, for an M-ary QAM scheme, we have the following:
\begin{equation}
prob(\textbf{C}=C|\textbf{P}=E_K^{-1}(C))= 1/M = prob(\textbf{C}=C)
\end{equation}
which meets the requirements of perfect secrecy. In other words, since the key $K$ is independent of plaintext $P$ and follows uniform distribution, it leads us to:
\begin{equation}
prob(\textbf{P}=P|\textbf{C}=E_K(P))=1/M=prob(\textbf{P}=P)
\end{equation}
More rigorously:\hspace{.5mm}$prob(\textbf{P}=P|\textbf{C}=C)$
\begin{eqnarray}
&=& \frac{prob(\textbf{P}=P, \textbf{C}=C)}{prob(\textbf{C}=C)} \nonumber \\
&=&\frac{prob(\textbf{C}=C|\textbf{P}=P)\hspace{.5mm}prob(\textbf{P}=P)}{\displaystyle\sum\limits_{P'\in \textbf{P}} prob(\textbf{C}=C|\textbf{P}=P')\hspace{.5mm}prob(\textbf{P}=P')} \nonumber \\
&=&\frac{prob(K=C\rightarrow P)\hspace{.5mm}prob(\textbf{P}=P)}{\displaystyle\sum\limits_{P'\in \textbf{P}} prob(K=C\rightarrow P') \hspace{.5mm}prob(\textbf{P}=P')} \nonumber \\
&=&\frac{\frac{1}{M}\hspace{.5mm}prob(\textbf{P}=P)}{\displaystyle\sum\limits_{P'\in \textbf{P}} \frac{1}{M} \hspace{.5mm}prob(\textbf{P}=P')} \nonumber \\
&=&prob(\textbf{P}=P)
\end{eqnarray}
where $K$=$C$$\rightarrow$$P$ refers that key $K$ is a mapping between plaintext $P$ and ciphertext $C$. 
\end{proof}
In addition, according to \textit{perfect cipher keyspace theorem}~\cite{Shannon49}~\footnote{Also known as \textit{Shannon bound}.}, if a cipher is perfect, there must be at least as many keys ($l$) are there are possible messages ($n$). This leads us to the following corollary:
\begin{cor}
Messages in CD-PHY with M-ary QAM scheme should contain less than $n$ symbols such that $M!\geq M^n$ to maintain perfect secrecy.
\end{cor}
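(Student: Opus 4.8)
The plan is to derive the corollary as an immediate consequence of the perfect cipher keyspace theorem (the Shannon bound) invoked just above, once the relevant key space and message space of CD-PHY are counted correctly. The only real content is bookkeeping: recognising that while the Theorem established per-symbol perfect secrecy, a message is a \emph{sequence} of symbols all enciphered under the \emph{same} mapping, so the effective cipher on $n$-symbol messages has a key space of size $M!$ but a plaintext space of size $M^n$.

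First I would fix notation: let $\mathcal{K}$ be the set of bijections from the $M$ bit sequences (each of length $\log_2 M$) to the $M$ constellation points, so $|\mathcal{K}| = M!$, and let $\mathcal{M}$ be the set of admissible plaintexts of length $n$ symbols, so $|\mathcal{M}| \le M^n$, with equality when every symbol sequence is a legal message. Then I would invoke the perfect cipher keyspace theorem: if CD-PHY attains perfect secrecy on $\mathcal{M}$, the number of keys must be at least the number of possible messages, i.e.\ $|\mathcal{K}| \ge |\mathcal{M}|$. Substituting the two cardinalities gives $M! \ge M^n$, which is exactly the claimed condition; taking logarithms rewrites it as $n \le \log_M(M!) = \ln(M!)/\ln M$, an explicit ceiling on how many symbols may be transmitted under one mapping.

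Finally I would tie the bound back to Definitions~\ref{defn:perfectsecrecy} and~\ref{defn:unicity}: as long as $n$ respects $M! \ge M^n$ we have $H(K) \ge H(\textbf{P})$, so the ciphertext cannot pin down the plaintext and the unicity-distance reasoning of Section~\ref{shannonsecrecy} continues to apply; once $n$ exceeds this threshold, a pigeonhole argument forces some ciphertext to be unreachable from some plaintext, so Equation~\ref{eq:perfectsecrecy1} must fail for at least one $(P,C)$ pair, contradicting perfect secrecy.

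The subtle point — and the step I would be most careful about — is the message count $|\mathcal{M}|$: the conservative statement uses $M^n$, but if the plaintext carries redundancy $D$ (in the sense of Definition~\ref{defn:unicity}) the number of \emph{meaningful} messages is smaller, which only relaxes the inequality. I would therefore present $M! \ge M^n$ as the worst-case requirement and remark that natural redundancy makes the condition easier, not harder, to meet. I do not expect a genuine obstacle here — the argument is entirely in stating the two cardinalities correctly and citing the Shannon bound.
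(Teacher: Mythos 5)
Your argument is correct and is essentially the paper's own reasoning: the paper presents this corollary as an immediate consequence of the perfect cipher keyspace theorem, with the key space counted as the $M!$ constellation mappings and the message space as the $M^n$ possible $n$-symbol plaintexts, exactly as you do. Your additional remarks on redundancy and the logarithmic form $n \le \log_M(M!)$ go slightly beyond the paper but do not change the argument.
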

Definition \ref{defn:unicity} leads us to the following theorem:
\begin{thm}
The unicity distance of CD-PHY tends to infinity.
\end{thm}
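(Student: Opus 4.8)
The plan is to evaluate the two quantities appearing in Definition~\ref{defn:unicity} separately and then combine them. Writing $U = H(K)/D$, I would first argue that the numerator $H(K)$ is a fixed finite positive constant determined only by the modulation order $M$, and then argue that the denominator $D$, the per-symbol redundancy of the plaintext source, vanishes under the i.i.d.\ uniform source assumption already used in Section~\ref{theoreticalmodelling}. Once $H(K) > 0$ and $D = 0$ are established, $U = H(K)/D$ is unbounded, which is exactly the claim.

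For the numerator: the key is the bit-sequence-to-constellation mapping, and as noted when setting up Shannon's model there are exactly $M!$ such mappings. I would take the key to be drawn uniformly from this set (this is also what the perfect-secrecy proof implicitly uses), so $H(K) = \log_2(M!)$. For any finite $M$ this is a finite, strictly positive number (for instance $\log_2(16!) \approx 44.25$ bits for $16$-QAM), and crucially it does not grow with the number of transmitted symbols, since a single key is reused for the whole message.

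For the denominator: the redundancy is $D = \log_2 M - H_\infty$, where $\log_2 M$ is the maximum possible entropy per plaintext symbol (there are $M$ symbols) and $H_\infty$ is the actual entropy rate of the plaintext source. The modeling assumption made earlier, namely $P(S_k) = 1/16$ for all $k$ with the symbols mutually independent, says precisely that the source is memoryless and uniform, so $H_\infty = \log_2 M$ and therefore $D = 0$. Substituting into $U = H(K)/D$ gives $U = \log_2(M!)/0$, i.e.\ the unicity distance grows without bound; equivalently, no finite amount of intercepted ciphertext is sufficient for a brute-force key search to single out the true mapping.

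The main obstacle is really a modeling point rather than a computation: justifying $D = 0$. In any real communication stream the plaintext carries structure (framing, headers, non-uniform payload statistics), so $H_\infty < \log_2 M$ and $D > 0$, making $U$ large but finite. I would therefore state the theorem as holding under the idealized uniform-source model of Section~\ref{theoreticalmodelling}, and remark that whenever the source is whitened before modulation (e.g.\ by compression or scrambling) the redundancy can be driven arbitrarily close to zero, so that $U$ can be made as large as desired.
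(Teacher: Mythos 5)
Your proposal is correct and follows essentially the same route as the paper: compute $H(K) \approx \log M!$ for the uniformly chosen mapping, argue $D = 0$ from the independence (and, as you more carefully note, uniformity) of the symbols, and conclude $U = H(K)/D$ is unbounded. Your added caveat about real sources having $D > 0$ is a reasonable refinement but does not change the argument.
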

\begin{proof}
For a CD-PHY with M-ary QAM, entropy of the key $H(K)\approx \log M!$. Since, the symbols are independent of each other, the redundancy $D=0$ for the message. So, the unicity distance is $U\approx (\log M!/0) = \infty $. 
\end{proof}
Unicity distance is a theoretical measure of how many ciphertexts are required to determine a unique plaintext. If one has less than unicity distance ciphertext, it is not possible to identify if the deciphering is correct. In fact, when the redundancy approaches to zero, it is hard to attack even simple cipher. For CD-PHY, a unicity distance of infinity means that the eavesdropper won't be able to determine whether the deciphering is correct regardless of the number of the ciphertexts it has in its possession. This is, in fact, a very strong information theoretic guarantee of CD-PHY security.

\subsection{Security by complexity}
\label{complexity}
Now, we model the problem of brute-force key search attack\footnote{Finding the bit sequence to constellation point mapping.} on CD-PHY as a \textit{complete bipartite graph perfect matching} problem and analyze the algorithmic complexity of it.
\begin{defn}
A complete bipartite graph is a bipartite graph where every vertex of the one set is connected to each vertex of the other set. Formally, a complete bipartite graph, $G =(V_1 \cup V_2, E)$, is a bipartite graph such that for any two vertices, $v_1 \in V_1$ and $v2 \in V2$, $v1v2$ is an edge in $G$. 
\end{defn}
From the definition of a complete bipartite graph \cite{Cormen01}, it is straightforward to see the following theorem.
\begin{thm}
The bit sequence to constellation point mapping in CD-PHY is a complete bipartite graph.
\label{thm:bipartite}
\end{thm}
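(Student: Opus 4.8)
The plan is to unpack the definitions directly: I would model the space of admissible keys as a bipartite graph and then verify that no edge is missing. First I would let $V_1$ be the set of all $M$ distinct bit sequences of length $\log_2 M$ that the transmitter may emit, and let $V_2$ be the set of the $M$ constellation points of the chosen M-ary QAM scheme; these are the two vertex classes, and $|V_1| = |V_2| = M$. I would then declare an edge $v_1 v_2 \in E$ to be present precisely when the key $K$ is allowed to send bit sequence $v_1 \in V_1$ to constellation point $v_2 \in V_2$, so that the set of edges encodes exactly the set of legal pairings that a mapping may use.

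Next I would argue that every such pair is in fact an edge. By the construction of CD-PHY the key is an arbitrary bijection from bit sequences to constellation points, and, crucially, the eavesdropper has no a priori information constraining it (as stipulated in the adversarial model); there is no structural reason forbidding any particular sequence from being assigned to any particular point. Hence for every $v_1 \in V_1$ and every $v_2 \in V_2$ the pair $v_1 v_2$ belongs to $E$, which is exactly the defining property of a complete bipartite graph. In particular $G = (V_1 \cup V_2, E)$ is $K_{M,M}$, and an individual legal key is then precisely a perfect matching in $G$, which is the structure the complexity argument of Section~\ref{complexity} will exploit.

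The step I expect to require the most care is not combinatorial but one of modelling: one must justify that the edge set is genuinely the full Cartesian product $V_1 \times V_2$, i.e. that CD-PHY does not implicitly restrict the allowed mappings (for instance by fixing pilot symbols, reserving certain constellation points, or otherwise pinning down part of the assignment). Assuming the scheme imposes no such restriction, which is consistent with the perfect-secrecy argument of Section~\ref{shannonsecrecy} where the key was already taken to range uniformly over all $M!$ bijections, completeness is immediate and the theorem follows.
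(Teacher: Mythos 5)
Your proposal takes essentially the same route as the paper: identify $V_1$ with the bit sequences, $V_2$ with the constellation points, and observe that since the key may map any sequence to any point, every cross pair is an edge, so the graph is complete bipartite and a key is a perfect matching. One small point in your favour: you correctly give $|V_1|=|V_2|=M$ (consistent with the $M!$ keys counted elsewhere), whereas the paper's proof misstates the cardinality as $\log_2 M$, which is the length of each bit sequence rather than the number of vertices.
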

\begin{proof}
A complete bipartite graph partitions the vertices into two sets $|V_1|=p$ and $|V_2|=q$. Now, we can see from Figure \ref{fig:perfectcipher} that each plaintext (bit sequence) on the left side of the graph can be considered a vertex of $V_1$ and each ciphertext (constellation points) on the right can be considered a vertex of $V_2$. Based on the key, it is possible to map every member of $V_1$ to any member of $V_2$. Thus, it constitutes a complete bipartite graph where $|V_1|=|V_2|=\log_2 M$ for an M-ary QAM scheme.
\end{proof}
Now, to explain perfect matching \cite{Plummer92} of the complete bipartite graph, we need the following definition.
\begin{defn}
A matching in a graph is a set of edges without common vertices. In a perfect matching, every vertex of the graph is connected to only one edge of the matching. 
\label{defn:matching}
\end{defn}
The counting version of complete bipartite graph perfect matching problem returns the total number of perfect matching where each edge in the matching connects two unique vertices from $V_1$ and $V_2$. Theorem \ref{thm:bipartite} and Definition \ref{defn:matching} leads us to the following theorem:
\begin{thm}
The brute-force key search attack on CD-PHY is:
\begin{enumerate} 
\item equivalent to counting version of complete bipartite graph perfect matching problem, and
\item in complexity class $\#P$ (Sharp P) complete.
\end{enumerate}
\end{thm}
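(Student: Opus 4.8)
The plan is to establish the two parts in order, treating part~(1) as a structural correspondence and part~(2) as a reduction. For part~(1), I would start from Theorem~\ref{thm:bipartite}: a key in CD-PHY is a bijection between the $M$ bit sequences (the vertices $V_1$) and the $M$ constellation points (the vertices $V_2$), so the space of admissible keys is exactly the set of perfect matchings of the complete bipartite graph $K_{M,M}$ associated with the mapping in Figure~\ref{fig:perfectcipher}. A brute-force adversary with no a priori information can do no better than sift through this space, and the size of that space is precisely the output of the counting version of the complete-bipartite perfect-matching problem, namely $\mathrm{perm}(J_M)=M!$, where $J_M$ is the all-ones biadjacency matrix. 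Thus, instance by instance, the brute-force key-search problem and the counting perfect-matching problem on the associated complete bipartite graph are the same problem, which is the claimed equivalence.

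For part~(2), I would invoke the classical result of Valiant that computing the permanent of a $0/1$ matrix --- equivalently, counting the perfect matchings of a bipartite graph --- is $\#P$-complete. Membership in $\#P$ is immediate: a candidate key is a polynomial-size object, and verifying that it is a valid matching consistent with whatever the adversary has learned takes polynomial time, so the number of valid keys is a $\#P$ function. For hardness I would give a parsimonious reduction from the general bipartite perfect-matching count: an arbitrary balanced bipartite graph $G=(U\cup W,E)$ is the constraint graph of some adversarial knowledge state, where the edges missing from $K_{|U|,|W|}$ are exactly the (bit sequence, symbol) pairs the adversary has ruled out (for instance from noise-free observations or known-plaintext fragments), and the perfect matchings of $G$ are precisely the keys still consistent with that state. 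Counting consistent keys is therefore as hard as counting perfect matchings of an arbitrary bipartite graph, hence $\#P$-complete.

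The main obstacle --- and the place the write-up must be careful --- is that the perfect-matching count of the \emph{complete} bipartite graph $K_{M,M}$ is the closed form $M!$, which is trivially computable; taken at face value, ``counting matchings of $K_{M,M}$'' is not $\#P$-complete. The proof must therefore commit to the correct formalization: the $\#P$-completeness is a statement about the adversary's residual search problem once partial information prunes edges from $K_{M,M}$, leaving an arbitrary bipartite graph, and it is that residual problem to which Valiant's theorem applies. I would make this pruning reduction explicit and note that, absent any such side information, the ``hardness'' the adversary faces is simply the exponential size $M!$ of the brute-force space rather than $\#P$-completeness in the strict sense; reconciling the informal and the formal readings of ``brute-force key search'' is the crux of the argument.
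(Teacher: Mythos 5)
Your proposal follows the same skeleton as the paper's proof --- part (1) via the identification of keys with perfect matchings of the complete bipartite graph from Theorem~\ref{thm:bipartite}, part (2) via the permanent and Valiant's 1979 theorem --- but you add a repair that the paper's own argument does not contain, and your cautionary remark at the end identifies a real defect in the published proof rather than a hypothetical one. The paper argues exactly as you warn against: it says that counting perfect matchings of the complete bipartite graph can be solved by computing the permanent of the biadjacency matrix, and then cites the $\#P$-completeness of the permanent. But the biadjacency matrix of $K_{M,M}$ is the all-ones matrix, whose permanent is $M!$ in closed form, so the specific counting instance arising from CD-PHY is trivially computable; the fact that a problem \emph{can} be solved by a method that is hard on general inputs does not make that problem hard. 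Your proposed formalization --- membership in $\#P$ via polynomial-time verifiability of a candidate key, and hardness via a parsimonious reduction in which an arbitrary balanced bipartite graph encodes the keys still consistent with the adversary's side information --- is what is actually needed to make part (2) literally true, and it is absent from the paper. The one caveat is that this repair changes the statement being proved: the theorem as written concerns the brute-force search with no side information, for which the honest conclusion is an $M!$-sized search space (exponential, but not $\#P$-complete in any meaningful sense), whereas $\#P$-completeness attaches only to the residual counting problem after pruning. You should state explicitly which of the two claims you are establishing; as it stands, your proof proposal is sounder than the paper's, but it proves a slightly different theorem.
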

\begin{proof}
Based on Theorem \ref{thm:bipartite} and Defintion \ref{defn:matching}, proof of part 1 is trivial. The problem of counting the number of perfect matching of a complete bipartite graph can be solved by computing the permanent of the bi-adjacency matrix \cite{Kozen92} of the graph. The permanent of a  matrix $A=n\times n$ is defined as:
\begin{equation}
perm(A)=\displaystyle\sum\limits_{\sigma}\displaystyle\prod\limits_{1}^{n} a_{i,\sigma(i)}
\label{eq:permanent}
\end{equation}
where $\sigma$ is a permutation over $\{1,2,\ldots,n\}$ . The complexity of computing permanent of a matrix is in complexity class $\#P$ complete, as proved by the seminal work \cite{Valiant79} of Valiant in 1979.
\end{proof}
In general, computing the permanent of a matrix is believed to be harder than its determinant. While one can compute the determinant in polynomial time by Gaussian elimination, the same cannot be used to compute the permanent. Thus, the computational complexity of the brute-force key search attack on CD-PHY also adds to the security of the scheme.  

\subsection{Defense against modulation classification schemes}
The section explains where does CD-PHY stand when the eavesdropper tries to apply some modulation classification techniques such as AMC \cite{Ramkumar09} and DMC \cite{Mobasseri00}.
\par AMC is based on cyclic feature detection technique considering the \textit{cyclostationary} property of the modulated signals. It considers the fact that modulated signals in practice have parameters that vary periodically with time. These hidden periodicities are used to classify the modulation techniques. Although, AMC is able to differentiate modulations such as BPSK, QPSK, and QAM based on large amount of training data and supervised learning, it can not identify the shape of the constellation and constellation mapping of symbols to constellation points. Also, for higher order QAM, the complexity of AMC makes it practically infeasible even to classify the modulation. 

\begin{figure*}[htbp]
	\centering
	\begin{minipage}[b]{0.3\linewidth}
		\includegraphics[scale=0.4]{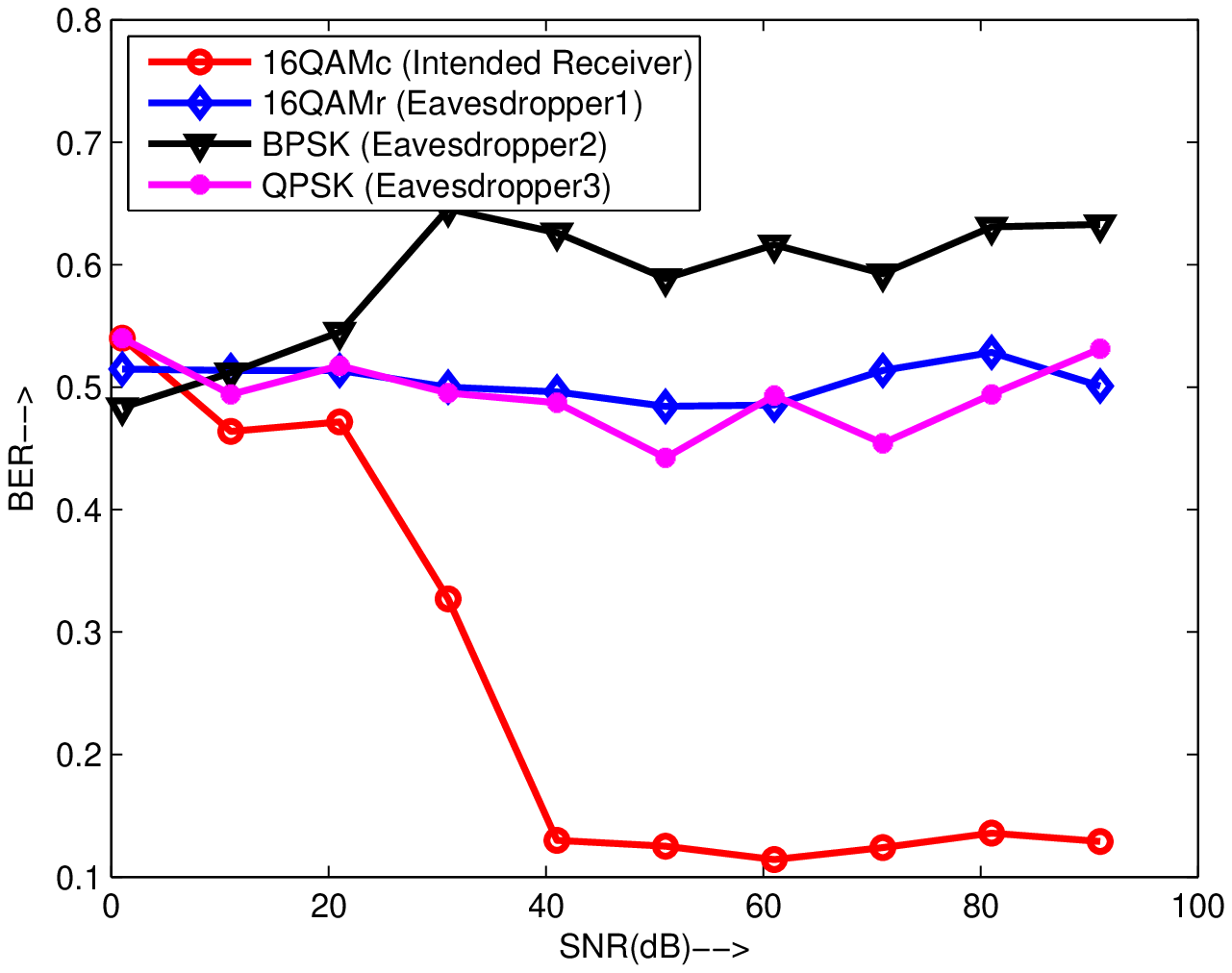}
	\caption{\scriptsize{BER vs SNR for $\alpha=2$ and $d=10m$}}
	\label{fig:scn1}
\end{minipage}
\hspace{0.25cm}
\begin{minipage}[b]{0.3\linewidth}
		\includegraphics[scale=0.4]{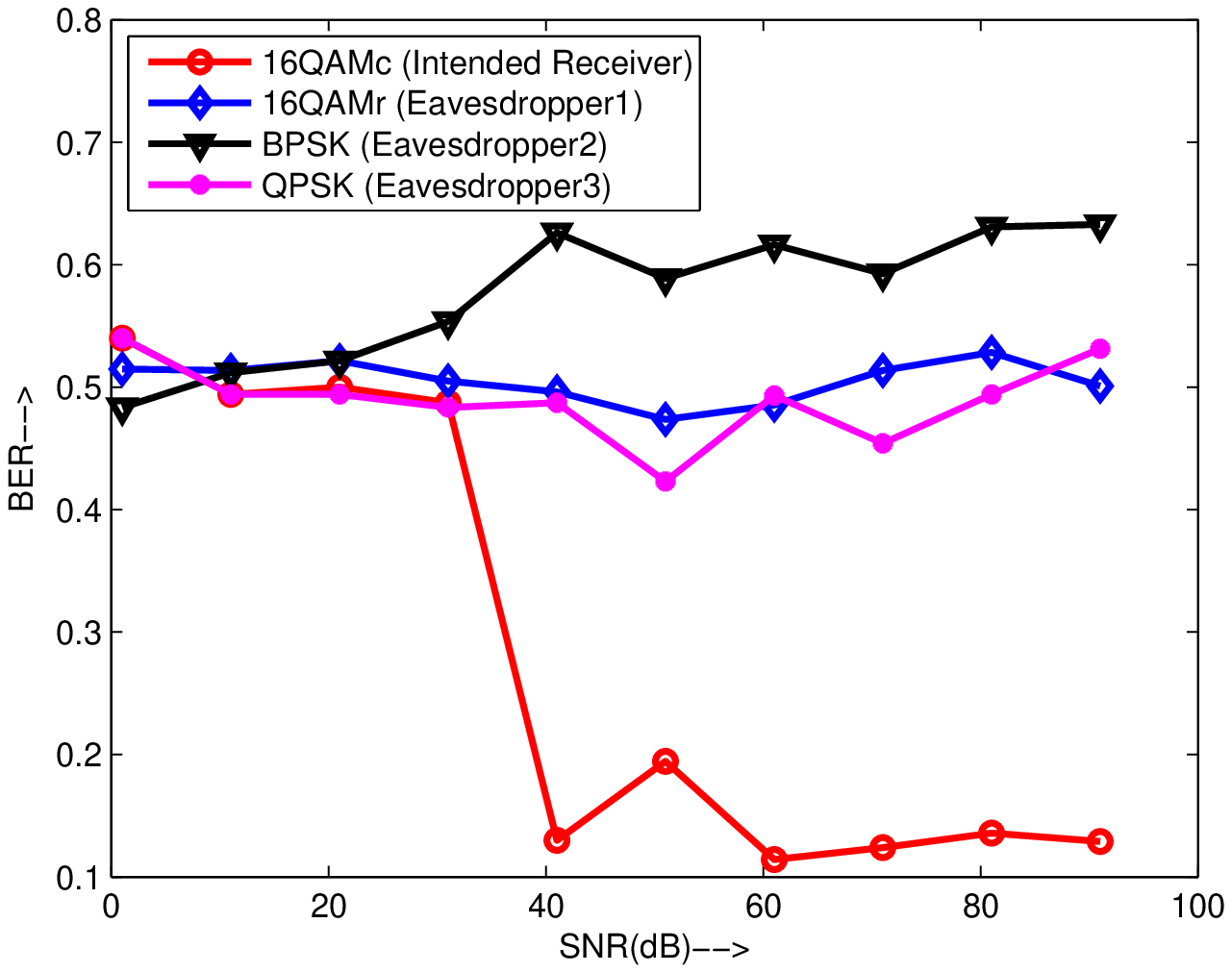}
	\caption{\scriptsize{BER vs SNR for $\alpha=2$ and $d=50m$}}
	\label{fig:scn2}
\end{minipage}
\hspace{0.25cm}
\begin{minipage}[b]{0.3\linewidth}
		\includegraphics[scale=0.4]{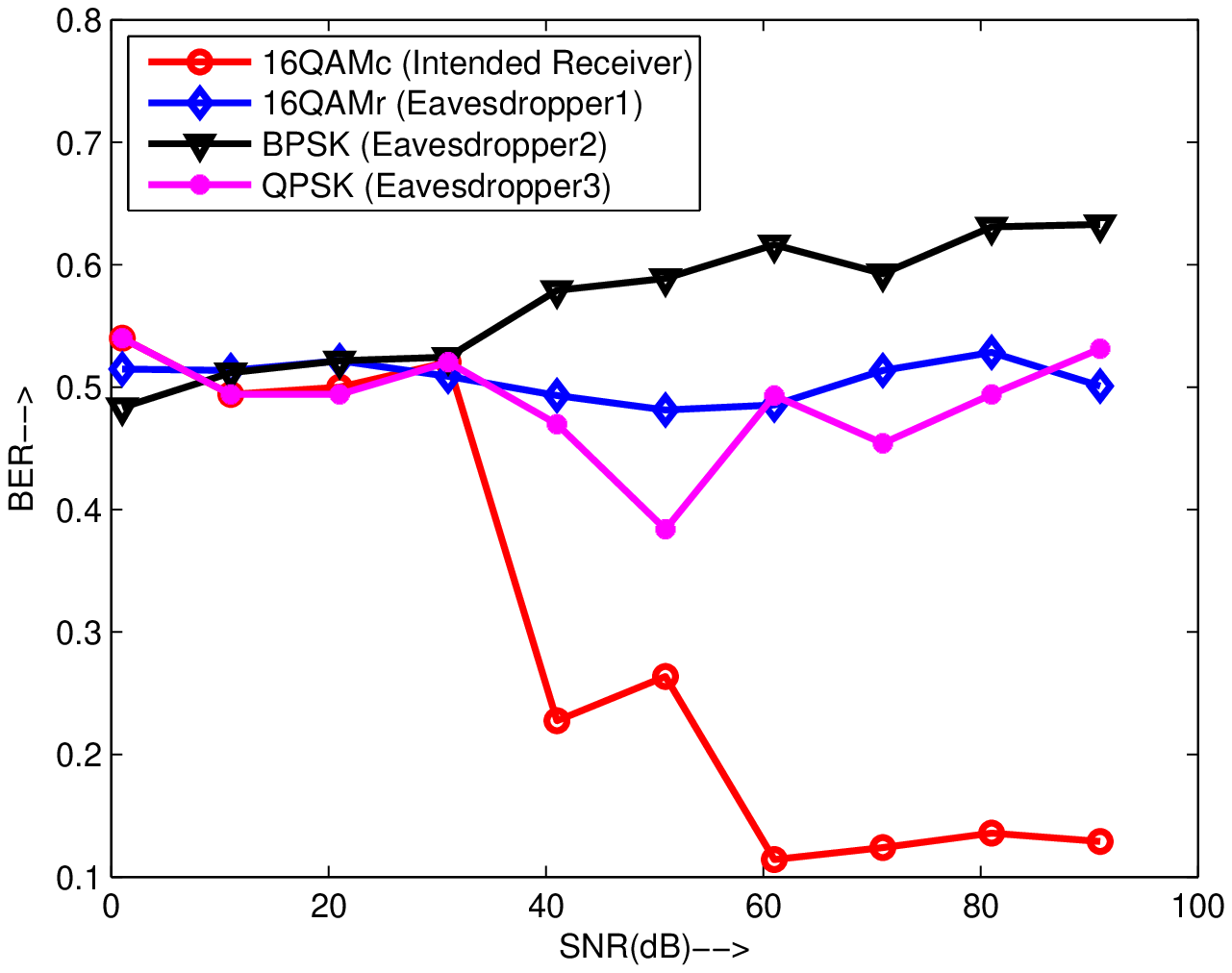}
	\caption{\scriptsize{BER vs SNR for $\alpha=2$ and $d=100m$}}
	\label{fig:scn3}
	\end{minipage}
\end{figure*}

\begin{figure*}[htbp]
\centering
\begin{minipage}[b]{0.3\linewidth}
		\includegraphics[scale=0.4]{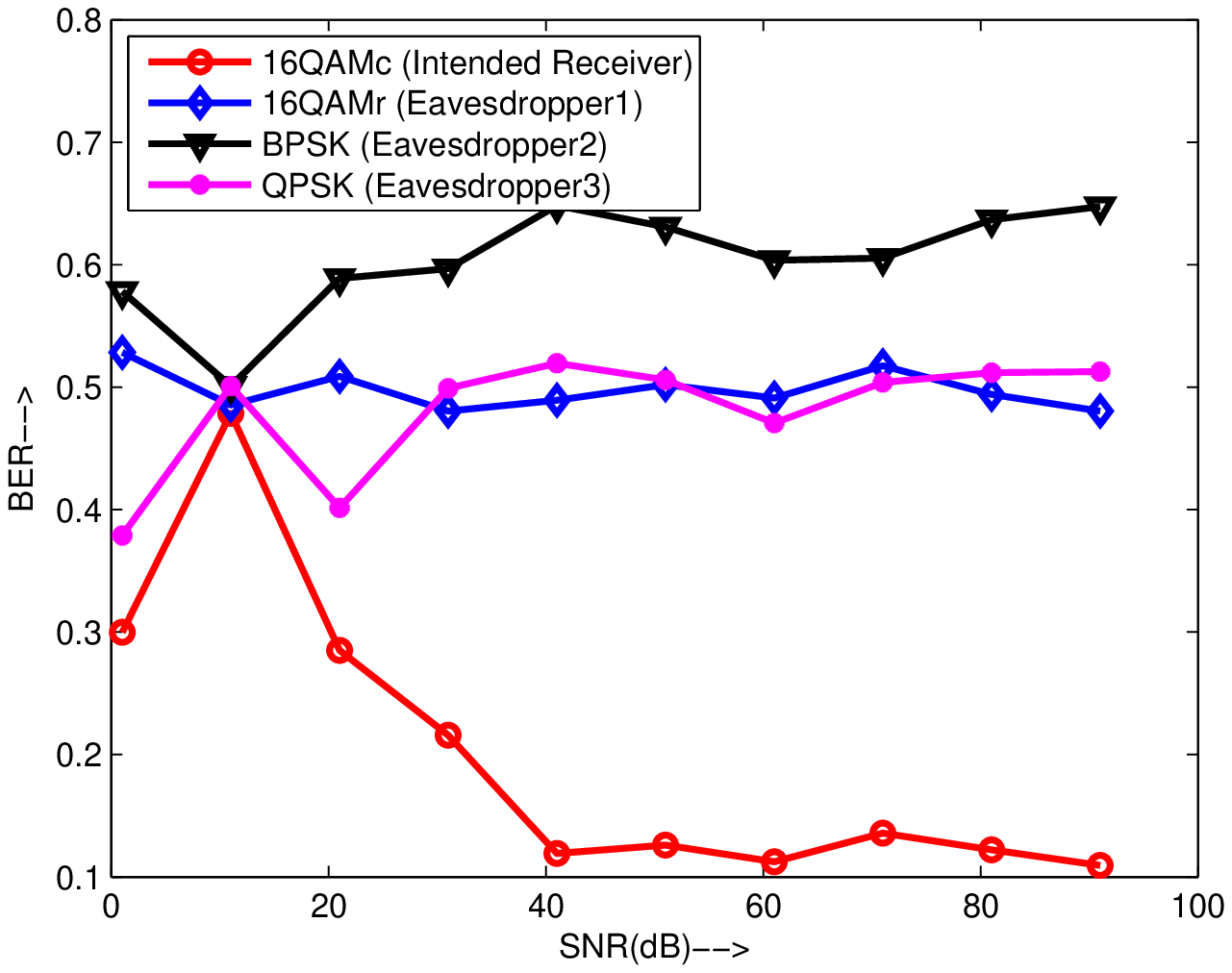}
	\caption{\scriptsize{BER vs SNR for $\alpha=1.4$ and $d=10m$}}
	\label{fig:scn4}
\end{minipage}
\hspace{0.25cm}
\begin{minipage}[b]{0.3\linewidth}
		\includegraphics[scale=0.4]{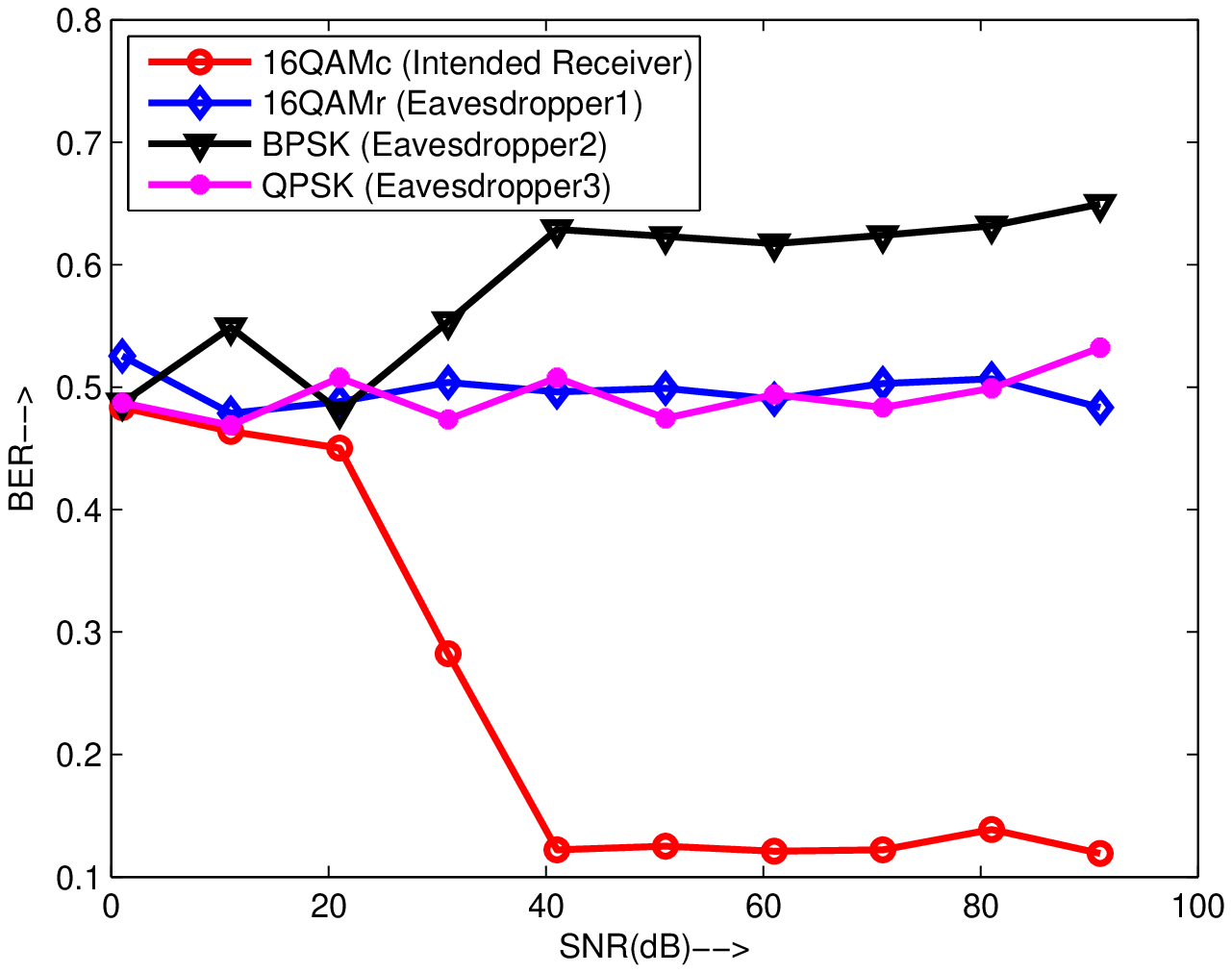}
	\caption{\scriptsize{BER vs SNR for $\alpha=1.4$ and $d=50m$}}
	\label{fig:scn5}
\end{minipage}
\hspace{0.25cm}		
\begin{minipage}[b]{0.3\linewidth}
\includegraphics[scale=0.4]{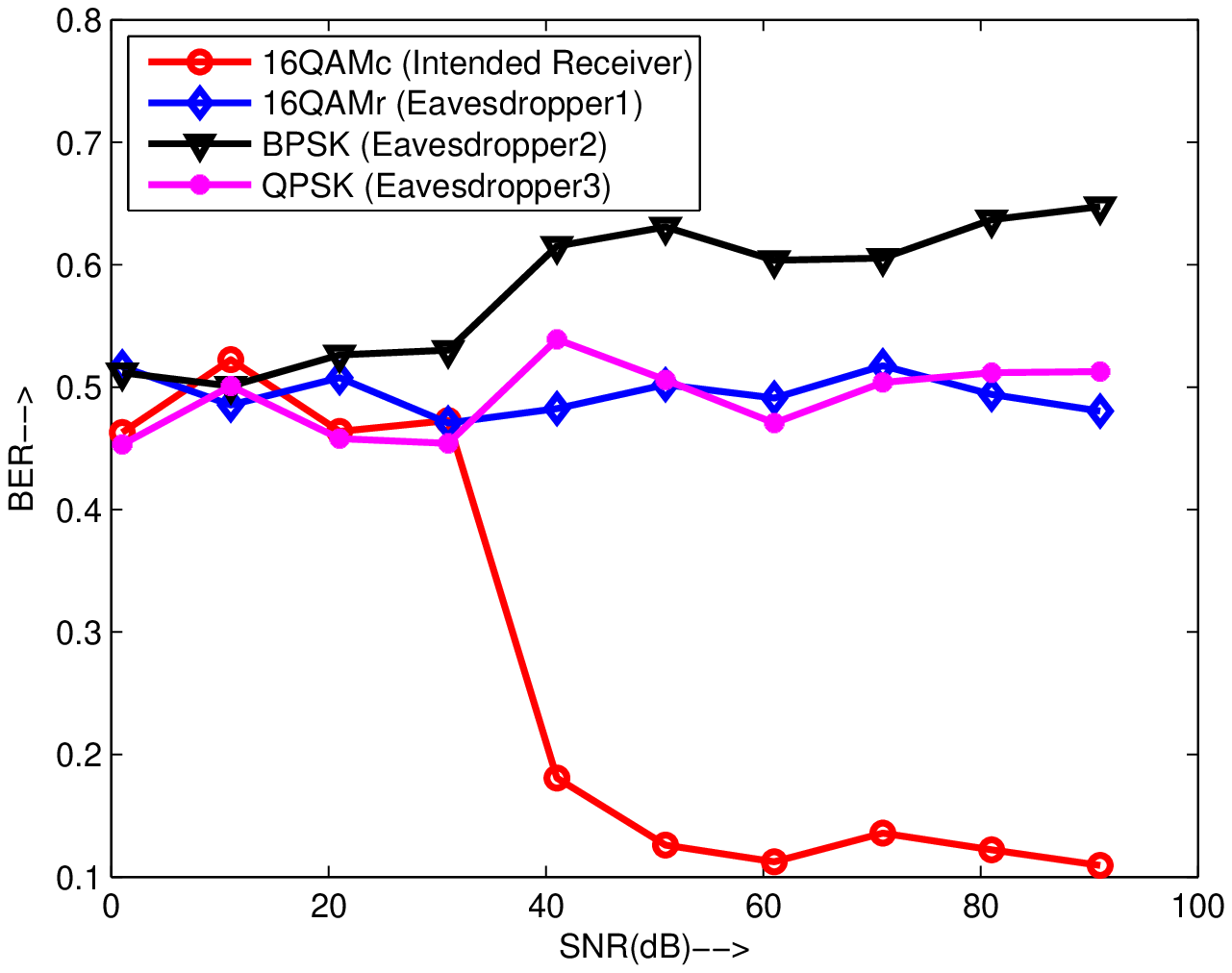}
	\caption{\scriptsize{BER vs SNR for $\alpha=1.4$ and $d=100m$}}
	\label{fig:scn6}
\end{minipage}
\end{figure*} 

\par DMC uses constellation shape as the basis of modulation classification. In this algorithm, the receiver constructs a scatter diagram of the  received noisy symbols in a complex plane and uses fuzzy c-means clustering to recover robust constellation. The modulation type is identified using maximum likelihood (ML) classification with predefined constellation templates. Similar to AMC, digital modulation classification also requires a large amount of training data and supervised learning to identify templates. Thus, although it can identify pre-defined constellation shapes, it is not able to identify constellation mapping from symbols to constellation points.

\par In summary, CD-PHY can withstand existing modulation classification techniques and secure against the attacks exploiting such techniques in practice.
\begin{figure}[ht]
	\centering	
	\includegraphics[scale=0.75]{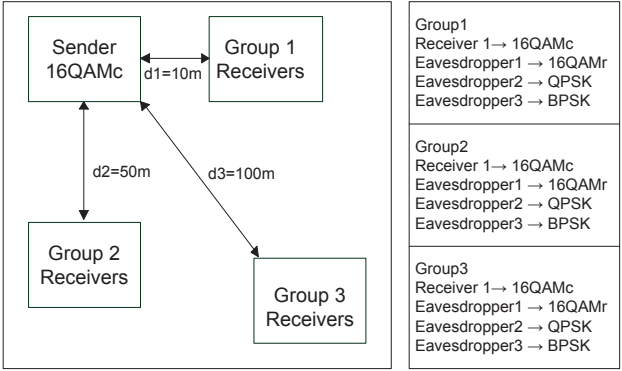}
	\caption{Simulated wireless network scenario. The sender uses 16QAM circular scheme. At different distances, each group has an intended receiver with 16QAM circular scheme and three eavesdroppers each with 16QAM rectangular, QPSK and BPSK scheme.}
	\label{fig:nws}
\end{figure} 
\section{Performance evaluation and simulation results}
\label{eval}
In this section, we show the impact of CD-PHY on the network performance of the eavesdropper. A very intuitive measure of such performance evaluation is to show how many bits are received in error at different signal and noise power. Typically, when the signal power increases, the receiver is able to decode the bits more accurately leading to a lower bit error rate (BER). In the following experiment, we show that the BER of CD-PHY receiver conforms to this pattern whereas the BER of the eavesdroppers does not decrease even for higher signal power. 
\par The experimental scenario is shown in Figure~\ref{fig:nws}. We designate a CD-PHY sender with 16QAM circular modulation scheme. The receivers are divided into three groups based on their distances from the sender. Group 1, group2 and group 3 are at 10m, 50m and 100m distance, respectively. Each group has an intended CD-PHY receiver with 16QAM circular scheme and three eavesdroppers with 16QAM rectangular, QPSK and BPSK scheme. 
\par We measure the BER at different receivers for different SNRs. Experimental scenarios contain both free space and indoor environments. Figure ~\ref{fig:scn1},~\ref{fig:scn2} and ~\ref{fig:scn3} show the measurements from free space environment. For the CD-PHY receiver, with the increment of SNR, the bit error rate decreases following the usual pattern of wireless communication. However, for eavesdroppers with different schemes, the bit error rate is more than $50\%$ regradless of the increment of SNR. The error rate is the highest in BPSK which is consistent with our analysis in Section \ref{backgroundandobservations}. As the distance increases, BER of BPSK scheme can go as high as $60\%$, resulting in a near to impossible decoding process.
\par Figure ~\ref{fig:scn4},~\ref{fig:scn5} and ~\ref{fig:scn6} show BER vs SNR for indoor environment. The bit error rates of the eavesdroppers are also as high as $50\%$ throughout the measurements for different SNR values. Similar to the free space environment, the distance of the receivers also adversely affect the bit error rate. 
\par Figure~\ref{fig:bercdf} aggregates the BER measurements for different locations of the eavesdropper. The median BER is around $50\%$ and the range is $40\%$ to $60\%$. It shows that in the presence of CD-PHY, the eavesdroppers experience such a high bit error rate that it is almost equivalent of randomly guessing the bits. This is true for both indoor and free space environment and ensures that the eavesdropper can not comprehend the signal when CD-PHY is in action.
\begin{figure}[ht]
	\centering	
	\includegraphics[scale=0.65]{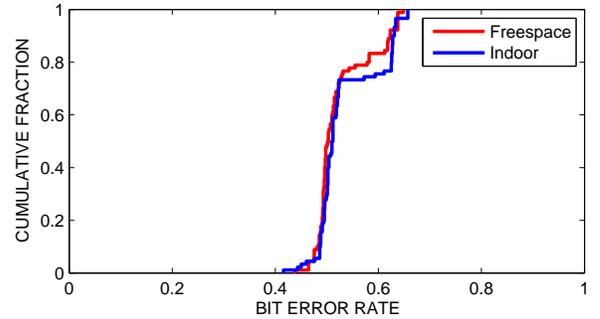}
	\caption{Eavesdropper bit error rate from indoor and free space experiments.}
	\label{fig:bercdf}
\end{figure} 
\section{Conclusion}
CD-PHY is a simple mechanism that introduces channel independent security at the physical layer of wireless communication. We have shown that in the presence of CD-PHY, the eavesdropper has a very low probability of successfully decoding the signal. The scheme achieves Shannon secrecy as a cipher and a brute-force key search attack on CD-PHY falls under complexity class $\#P$ which is believed to be harder than polynomial time algorithms. Our experimental results confirm the theoretical derivations; the bit error rate at the eavesdropper is significantly high and it is practically infeasible to decode the signal which ensures the communication secrecy between the sender and the intended receiver.
\bibliographystyle{IEEEtran}
\bibliography{HusainCDPHY}	
\end{document}